\pdfoutput=1
\documentclass[preprint,12pt]{elsarticle}
\usepackage{amsmath,amssymb,amsthm}
\usepackage{graphicx}
\usepackage{booktabs}
\usepackage{float}
\usepackage{xcolor}
\usepackage[colorlinks=true,allcolors=blue]{hyperref}
\biboptions{sort&compress}
\newtheorem{proposition}{Proposition}

\providecommand{\AFoldBestGONTHDEK}{3.15}

\providecommand{\BBNLnBGONTHDEK}{-4.66}

\providecommand{\BBNLnBRICCIK}{-2.94}
\providecommand{\BBNLnBRicciMinusGO}{+1.71\pm0.35}

\providecommand{\BBNOdmRicci}{-0.022^{+0.021}_{-0.022}}

\providecommand{\ChiGOHDE}{1485.01}
\providecommand{\ChiGOHDEK}{1484.86}
\providecommand{\ChiGONTHDE}{1484.87}
\providecommand{\ChiGONTHDEK}{1484.79}
\providecommand{\ChiLCDM}{1487.61}
\providecommand{\ChiOCPL}{1484.80}
\providecommand{\ChiOLCDM}{1486.96}
\providecommand{\ChiOWCDM}{1484.92}
\providecommand{\ChiRICCIK}{1484.81}
\providecommand{\ChiTTWOK}{1484.78}
\providecommand{\ChiWCDM}{1485.01}
\providecommand{\CorrOkHGONTHDEK}{-0.03}
\providecommand{\CorrOkWGONTHDEK}{-0.37}
\providecommand{\CorrUWGONTHDEK}{-0.85}
\providecommand{\DAICGOHDE}{+1.40}
\providecommand{\DAICGOHDEK}{+3.25}
\providecommand{\DAICGONTHDE}{+3.26}
\providecommand{\DAICGONTHDEK}{+5.18}
\providecommand{\DAICLCDM}{+0.00}
\providecommand{\DAICOCPL}{+3.19}
\providecommand{\DAICOLCDM}{+1.35}
\providecommand{\DAICOWCDM}{+1.31}
\providecommand{\DAICRICCIK}{+3.20}
\providecommand{\DAICTTWOK}{+5.17}
\providecommand{\DAICWCDM}{-0.60}
\providecommand{\DBICGOHDE}{+12.28}
\providecommand{\DBICGOHDEK}{+19.57}
\providecommand{\DBICGONTHDE}{+19.58}
\providecommand{\DBICGONTHDEK}{+26.93}
\providecommand{\DBICLCDM}{+0.00}
\providecommand{\DBICOCPL}{+19.51}
\providecommand{\DBICOLCDM}{+6.79}
\providecommand{\DBICOWCDM}{+12.19}
\providecommand{\DBICRICCIK}{+19.52}
\providecommand{\DBICTTWOK}{+26.93}
\providecommand{\DBICWCDM}{+4.84}
\providecommand{\DChiGOHDE}{-2.60}
\providecommand{\DChiGOHDEK}{-2.75}
\providecommand{\DChiGONTHDE}{-2.74}
\providecommand{\DChiGONTHDEK}{-2.82}
\providecommand{\DChiLCDM}{+0.00}
\providecommand{\DChiOCPL}{-2.81}
\providecommand{\DChiOLCDM}{-0.65}
\providecommand{\DChiOWCDM}{-2.69}
\providecommand{\DChiRICCIK}{-2.80}
\providecommand{\DChiTTWOK}{-2.83}
\providecommand{\DChiWCDM}{-2.60}
\providecommand{\DDICGOHDE}{-1.33}
\providecommand{\DDICGOHDEK}{+0.47}
\providecommand{\DDICGONTHDE}{-0.91}
\providecommand{\DDICGONTHDEK}{+0.81}
\providecommand{\DDICLCDM}{+0.00}
\providecommand{\DDICOCPL}{+2.96}
\providecommand{\DDICOLCDM}{+0.72}
\providecommand{\DDICOWCDM}{+0.90}
\providecommand{\DDICRICCIK}{+1.36}
\providecommand{\DDICTTWOK}{+1.81}
\providecommand{\DDICWCDM}{-1.15}

\providecommand{\DerAFoldGONTHDEK}{1.228^{+1.785}_{-0.210}}

\providecommand{\DerDeltaGONTHDEK}{2.000^{+0.312}_{-0.830}}

\providecommand{\DerOdkGONTHDEK}{0.316^{+0.101}_{-0.084}}

\providecommand{\DerOktGONTHDEK}{-0.003^{+0.060}_{-0.068}}

\providecommand{\DerQzGONTHDEK}{-0.432^{+0.125}_{-0.074}}

\providecommand{\DerRkGONTHDEK}{1.461^{+0.252}_{-0.160}}

\providecommand{\DerTFoldGONTHDEK}{2.9^{+15.7}_{-2.7}}

\providecommand{\DerWaGONTHDEK}{-0.278^{+0.369}_{-2.915}}

\providecommand{\DerWzGONTHDEK}{-0.834^{+0.105}_{-0.086}}

\providecommand{\DerZtGONTHDEK}{0.694^{+0.034}_{-0.032}}

\providecommand{\HCBGONTHDEK}{66.65\pm2.17}

\providecommand{\HSNGONTHDEK}{73.36\pm1.04}

\providecommand{\JHEAPOk}{0.012^{+0.015}_{-0.025}}
\providecommand{\KGOHDE}{5}
\providecommand{\KGOHDEK}{6}
\providecommand{\KGONTHDE}{6}
\providecommand{\KGONTHDEK}{7}
\providecommand{\KLCDM}{3}
\providecommand{\KOCPL}{6}
\providecommand{\KOLCDM}{4}
\providecommand{\KOWCDM}{5}
\providecommand{\KRICCIK}{6}
\providecommand{\KTTWOK}{7}
\providecommand{\KWCDM}{4}
\providecommand{\LitHalPPDAIC}{+2.07}
\providecommand{\LitHalPPDChi}{+0.07}
\providecommand{\LitHalPPLnZ}{-3.73}
\providecommand{\LitHalUDAIC}{+1.73}
\providecommand{\LitHalUDChi}{-0.27}
\providecommand{\LitHalULnZ}{-2.19}
\providecommand{\LitLPSDAIC}{+4.2}
\providecommand{\LitLPSDChi}{+0.2}
\providecommand{\LitLucPPDAIC}{+1.73}
\providecommand{\LitLucPPDChi}{-4.27}
\providecommand{\LitLucPPDelta}{-0.041}
\providecommand{\LitLucUDAIC}{-3.63}
\providecommand{\LitLucUDChi}{-9.63}
\providecommand{\LitLucUDelta}{-0.234}
\providecommand{\LitMahDAIC}{-4.03}
\providecommand{\LitMahDChi}{-10.03}
\providecommand{\LitYadOk}{0.012^{+0.015}_{-0.025}}

\providecommand{\LitYadW}{-0.922\pm0.047}
\providecommand{\LnBAGOHDE}{-2.33}
\providecommand{\LnBAGOHDEK}{-3.51}
\providecommand{\LnBAGONTHDE}{-1.28}
\providecommand{\LnBAGONTHDEK}{-2.61}
\providecommand{\LnBALCDM}{+0.00}
\providecommand{\LnBAOCPL}{-5.54}
\providecommand{\LnBAOLCDM}{-1.50}
\providecommand{\LnBAOWCDM}{-3.36}
\providecommand{\LnBARICCIK}{-2.94}
\providecommand{\LnBATTWOK}{-2.66}
\providecommand{\LnBAWCDM}{-1.87}
\providecommand{\LnBCurvVsFlat}{-1.39\pm0.25}
\providecommand{\LnBCurvVsFlatHDE}{-1.28\pm0.27}
\providecommand{\LnBGOHDE}{-2.33\pm0.24}
\providecommand{\LnBGOHDEK}{-3.61\pm0.25}
\providecommand{\LnBGOHDEvsOWCDM}{-0.25\pm0.27}
\providecommand{\LnBGONTHDE}{-1.28\pm0.24}
\providecommand{\LnBGONTHDEK}{-2.66\pm0.24}
\providecommand{\LnBGOvsRicci}{+0.28\pm0.26}
\providecommand{\LnBLCDM}{+0.00\pm0.22}
\providecommand{\LnBNTHDEvsHDEK}{+0.94\pm0.27}
\providecommand{\LnBOCPL}{-5.79\pm0.26}
\providecommand{\LnBOLCDM}{-1.50\pm0.23}
\providecommand{\LnBOWCDM}{-3.36\pm0.25}
\providecommand{\LnBRICCIK}{-2.94\pm0.24}
\providecommand{\LnBTTWOK}{-2.70\pm0.24}
\providecommand{\LnBTwin}{-0.46\pm0.26}
\providecommand{\LnBWCDM}{-1.87\pm0.24}

\providecommand{\LnBvsOLCDMGONTHDEK}{-1.16\pm0.25}

\providecommand{\MockNin}{6}
\providecommand{\MockNpar}{7}
\providecommand{\MockOdmHi}{0.226}
\providecommand{\MockOdmTruth}{0.229}
\providecommand{\MockOkPost}{0.049^{+0.041}_{-0.040}}

\providecommand{\MockOkTruth}{0.050}
\providecommand{\MockUzeroSd}{0.41}
\providecommand{\MockUzeroTruth}{0.80}
\providecommand{\NEqChecks}{25}
\providecommand{\NEqOK}{25}
\providecommand{\NTests}{17}

\providecommand{\OkHiGONTHDEK}{+0.080}

\providecommand{\OkLoGONTHDEK}{-0.087}

\providecommand{\PFoldGONTHDE}{0.94}
\providecommand{\PFoldGONTHDEK}{0.91}

\providecommand{\POpenGONTHDEK}{0.49}

\providecommand{\POpenOLCDM}{0.79}

\providecommand{\PPhantomGONTHDEK}{0.02}

\providecommand{\PostHzeroGOHDEK}{72.00^{+0.87}_{-0.86}}

\providecommand{\PostHzeroGONTHDEK}{71.88^{+0.90}_{-0.90}}
\providecommand{\PostHzeroLCDM}{72.34^{+0.85}_{-0.84}}
\providecommand{\PostHzeroOCPL}{71.93^{+0.90}_{-0.86}}
\providecommand{\PostHzeroOLCDM}{72.21^{+0.85}_{-0.85}}
\providecommand{\PostHzeroOWCDM}{72.00^{+0.87}_{-0.86}}
\providecommand{\PostHzeroRICCIK}{71.85^{+0.87}_{-0.87}}
\providecommand{\PostHzeroTTWOK}{71.91^{+0.90}_{-0.88}}

\providecommand{\PostOdmGONTHDEK}{0.129^{+0.104}_{-0.087}}

\providecommand{\PostOkGOHDEK}{-0.012^{+0.045}_{-0.045}}

\providecommand{\PostOkGONTHDEK}{-0.001^{+0.041}_{-0.043}}

\providecommand{\PostOkOCPL}{-0.005^{+0.048}_{-0.047}}
\providecommand{\PostOkOLCDM}{0.029^{+0.036}_{-0.035}}
\providecommand{\PostOkOWCDM}{-0.009^{+0.046}_{-0.045}}
\providecommand{\PostOkRICCIK}{0.002^{+0.043}_{-0.044}}
\providecommand{\PostOkTTWOK}{-0.004^{+0.041}_{-0.042}}

\providecommand{\PostOmLCDM}{0.301^{+0.008}_{-0.008}}
\providecommand{\PostOmOCPL}{0.299^{+0.017}_{-0.023}}
\providecommand{\PostOmOLCDM}{0.294^{+0.011}_{-0.011}}
\providecommand{\PostOmOWCDM}{0.298^{+0.012}_{-0.012}}

\providecommand{\PostOmtGOHDEK}{0.300^{+0.019}_{-0.018}}

\providecommand{\PostOmtGONTHDEK}{0.304^{+0.017}_{-0.017}}

\providecommand{\PostOmtRICCIK}{0.304^{+0.013}_{-0.013}}
\providecommand{\PostOmtTTWOK}{0.304^{+0.017}_{-0.017}}

\providecommand{\PostRdGOHDEK}{139.35^{+1.75}_{-1.74}}

\providecommand{\PostRdGONTHDEK}{139.49^{+1.82}_{-1.75}}
\providecommand{\PostRdLCDM}{139.96^{+1.75}_{-1.73}}
\providecommand{\PostRdOCPL}{139.43^{+1.77}_{-1.74}}
\providecommand{\PostRdOLCDM}{139.87^{+1.74}_{-1.72}}
\providecommand{\PostRdOWCDM}{139.36^{+1.77}_{-1.73}}
\providecommand{\PostRdRICCIK}{139.54^{+1.77}_{-1.73}}
\providecommand{\PostRdTTWOK}{139.47^{+1.77}_{-1.77}}

\providecommand{\PostWOWCDM}{-0.931^{+0.046}_{-0.050}}

\providecommand{\PostWgoGOHDEK}{-0.922^{+0.062}_{-0.070}}

\providecommand{\PostWgoGONTHDEK}{-1.696^{+0.514}_{-0.476}}

\providecommand{\PostWgoRICCIK}{-1.789^{+0.556}_{-0.416}}
\providecommand{\PostWgoTTWOK}{-1.666^{+0.497}_{-0.475}}

\providecommand{\PostWzeroOCPL}{-0.914^{+0.061}_{-0.062}}

\providecommand{\QDMAPGONTHDEK}{9.87}
\providecommand{\QDMAPOLCDM}{9.74}

\providecommand{\QSigGONTHDEK}{2.6}
\providecommand{\QSigOLCDM}{2.6}

\providecommand{\RobDAICPPNOSHOES}{+3.64}

\providecommand{\RobDAICUNIONTHREE}{-1.35}
\providecommand{\RobDChiBBN}{-3.69}

\providecommand{\RobDChiCCSYS}{-3.73}

\providecommand{\RobDChiDOVEKIE}{-6.48}

\providecommand{\RobDChiODMWIDE}{-2.86}

\providecommand{\RobDChiOKGAUSS}{-2.87}

\providecommand{\RobDChiOKNARROW}{-2.83}

\providecommand{\RobDChiOKWIDE}{-2.83}

\providecommand{\RobDChiPPNOSHOES}{-4.36}

\providecommand{\RobDChiUHALF}{-2.83}

\providecommand{\RobDChiUNIONTHREE}{-9.35}

\providecommand{\RobLnBABBN}{-4.62}
\providecommand{\RobLnBACCSYS}{-2.23}
\providecommand{\RobLnBADOVEKIE}{-0.95}
\providecommand{\RobLnBAODMWIDE}{-2.99}
\providecommand{\RobLnBAOKGAUSS}{-6.00}
\providecommand{\RobLnBAOKNARROW}{-1.78}
\providecommand{\RobLnBAOKWIDE}{-3.39}
\providecommand{\RobLnBAPPNOSHOES}{-1.89}
\providecommand{\RobLnBAUHALF}{-2.90}
\providecommand{\RobLnBAUNIONTHREE}{+0.58}
\providecommand{\RobLnBBBN}{-4.66}

\providecommand{\RobLnBCCSYS}{-2.29}

\providecommand{\RobLnBDOVEKIE}{-1.00}

\providecommand{\RobLnBODMWIDE}{-3.03}

\providecommand{\RobLnBOKGAUSS}{-6.05}

\providecommand{\RobLnBOKNARROW}{-1.79}

\providecommand{\RobLnBOKWIDE}{-3.58}

\providecommand{\RobLnBPPNOSHOES}{-1.94}

\providecommand{\RobLnBUHALF}{-2.97}

\providecommand{\RobLnBUNIONTHREE}{+0.53}

\providecommand{\RobOkBBN}{+0.001^{+0.040}_{-0.042}}
\providecommand{\RobOkCCSYS}{+0.008^{+0.041}_{-0.042}}
\providecommand{\RobOkDOVEKIE}{+0.006^{+0.042}_{-0.038}}
\providecommand{\RobOkODMWIDE}{-0.007^{+0.038}_{-0.039}}
\providecommand{\RobOkOKGAUSS}{+0.001^{+0.002}_{-0.002}}
\providecommand{\RobOkOKNARROW}{+0.004^{+0.039}_{-0.041}}
\providecommand{\RobOkOKWIDE}{-0.001^{+0.044}_{-0.044}}
\providecommand{\RobOkPPNOSHOES}{+0.017^{+0.042}_{-0.046}}
\providecommand{\RobOkUHALF}{-0.003^{+0.043}_{-0.045}}
\providecommand{\RobOkUNIONTHREE}{+0.001^{+0.045}_{-0.048}}
\providecommand{\RobRkBBN}{1.42^{+0.25}_{-0.10}}

\providecommand{\RobRkODMWIDE}{1.61^{+0.53}_{-0.25}}

\providecommand{\TFoldBestGONTHDEK}{20.1}

\providecommand{\UzeroULGONTHDEK}{1.52}

\providecommand{\UzeroULRICCIK}{1.52}
\providecommand{\UzeroULTTWOK}{1.50}

\newcommand{\Omt}{\tilde\Omega_m}
\newcommand{\Okt}{\tilde\Omega_k}
\newcommand{\wgo}{w_{\rm GO}}
\newcommand{\Odm}{\Omega_d^{(m)}}
\newcommand{\Odr}{\Omega_d^{(r)}}
\newcommand{\Odk}{\Omega_d^{(k)}}
\newcommand{\Rk}{R_k}
\newcommand{\Ok}{\Omega_{k0}}
\newcommand{\keff}{k_{\rm eff}}
\newcommand{\Mp}{M_{\rm p}}
\newcommand{\SBH}{S_{\rm BH}}
\newcommand{\pipefig}[2]{\IfFileExists{figures/#1.pdf}{\includegraphics[width=#2]{figures/#1}}%
  {\fbox{\parbox[c][0.16\textheight][c]{\dimexpr #2-2\fboxsep-2\fboxrule\relax}{\centering\footnotesize figure produced by the pipeline\\(\texttt{\detokenize{figures/#1}})}}}}

\journal{Nuclear Physics B}
\makeatletter
\def\ps@pprintTitle{%
  \let\@oddhead\@empty
  \let\@evenhead\@empty
  \let\@oddfoot\@empty
  \let\@evenfoot\@oddfoot}
\makeatother

\begin{document}
\begin{frontmatter}

\title{New Tsallis holographic dark energy with the Granda--Oliveros cutoff in a non-flat universe: curvature tracking and DESI DR2 constraints}

\author[gla]{Umesh Kumar Sharma}
\ead{sharma.umesh@gla.ac.in}
\author[iilm]{Pankaj}
\ead{pankaj.fellow@yahoo.co.in}
\author[utas]{Bramha Dutta Pandey}
\ead{bdpandey05@gmail.com}
\author[utas]{P.~Suresh~Kumar\corref{cor1}}
\ead{sureshharinirahav@gmail.com}
\cortext[cor1]{Corresponding author}

\affiliation[gla]{organization={Department of Mathematics, Institute of Applied Sciences and Humanities, GLA University},
  city={Mathura}, postcode={281406}, state={Uttar Pradesh}, country={India}}
\affiliation[iilm]{organization={School of Sciences, IILM University, Knowledge Park-II},
  city={Greater Noida}, postcode={201310}, country={India}}
\affiliation[utas]{organization={Department of Supportive Requirements, University of Technology and Applied Sciences},
  city={Salalah}, country={Sultanate of Oman}}

\begin{abstract}
Holographic dark energy ties the dark-energy density to an infrared length. We study new Tsallis holographic dark energy (NTHDE), built on the exponential Tsallis entropy, with the local Granda--Oliveros (GO) cutoff $L^{-2}=\alpha H^2+\gamma\dot H$ in a spatially curved universe. Writing the background as an autonomous system for $(\Omega_d,u,\Omega_k)$, we obtain four exact results. (i) In the closed-form solution without the Tsallis deformation, the dark energy splits into a matter-like and a curvature-like part. (ii) Because the GO cutoff lacks the term $k/a^2$ of the Ricci scalar, the coefficient of $(1+z)^2$ in $H^2/H_0^2$ is $\Rk\,\Omega_{k0}$ with $\Rk=(1-\alpha+\gamma)^{-1}>1$ at every physical point, while distances and the Clarkson--Bassett--Lu test return $\Omega_{k0}$; the Ricci completion restores $\Rk=1$. (iii) Big-bang nucleosynthesis locks the matter-era dark-energy fraction to $\Omega_d^{(m)}=1/(1+3|\wgo|)$, where $\wgo$ is the equation of state of the GO mode, and then $\Rk=(1+3|\wgo|)/(3|\wgo|-1)$. (iv) The finite-time fold singularity of the flat model survives curvature. Against Pantheon+SH0ES supernovae, 32 cosmic chronometers and DESI DR2 BAO, the curvature of curved GO-NTHDE is consistent with zero, $\Omega_{k0}=\PostOkGONTHDEK$. Curved GO-NTHDE gives $\Delta\chi^2=\DChiGONTHDEK$ relative to flat $\Lambda$CDM, but the Bayesian evidence prefers flat $\Lambda$CDM ($\ln B=\LnBGONTHDEK$, model against $\Lambda$CDM), and allowing curvature costs $\ln B=\LnBCurvVsFlat$ relative to flat GO-NTHDE. The data cannot yet measure $\Rk$: the GO and Ricci cutoffs fit equally well, $\ln B=\LnBGOvsRicci$. With Union3 supernovae the model is on a par with $\Lambda$CDM, $\ln B=\RobLnBUNIONTHREE$. Nucleosynthesis locks the dark-energy fraction onto the Ricci surface as in the flat case, and a fraction $\PFoldGONTHDEK$ of the posterior reaches the finite-time fold.
\end{abstract}

\begin{keyword}
Holographic dark energy \sep Tsallis entropy \sep Granda--Oliveros cutoff \sep Spatial curvature \sep Big-bang nucleosynthesis \sep DESI DR2
\end{keyword}
\end{frontmatter}

\section{Introduction}
\label{sec:intro}

The accelerated expansion of the Universe~\cite{SupernovaSearchTeam:1998fmf,SupernovaCosmologyProject:1998vns} is described well by a cosmological constant, whose value nevertheless remains unexplained~\cite{Weinberg:1988cp,Copeland:2006wr}. Baryon acoustic oscillation (BAO) data from DESI DR2, combined with supernovae, show a mild preference for evolving dark energy that depends on the supernova compilation~\cite{DESI:2025zgx,Rubin:2023ovl,Popovic:2025dovekie}. A second question has sharpened at the same time: is space exactly flat? Planck temperature and polarization spectra prefer a closed universe at more than $3\sigma$ in some analyses~\cite{Planck:2018vyg,Handley:2019tkm,DiValentino:2019qzk}, an effect traced to the lensing amplitude and disputed in others~\cite{Efstathiou:2020wem}; DESI DR2 BAO combined with the CMB prefer a small negative curvature at about $2\sigma$ within $\Lambda$CDM~\cite{Chen:2025mlf}. Late-time analyses of dynamical dark energy with curvature give small, model-dependent values of $\Omega_{k0}$ with uncertainties of a few times $10^{-2}$; for $w$CDM with curvature, DESI DR2 with BBN, chronometers and Pantheon+ gives $\Omega_{k0}=\JHEAPOk$~\cite{Yadav:2026jheap}.\footnote{Taken from Table~1 of Ref.~\cite{Yadav:2026jheap}, combination DR2+BBN+OHD+PP; sign convention $\Omega_{k0}>0$ open.} Dark energy and curvature are partly degenerate in background data, so a dark-energy model that carries its own relation to curvature is of interest.

Holographic dark energy (HDE) ties the dark-energy density to an infrared length $L$ through the holographic bound~\cite{tHooft:1993dmi,Susskind:1994vu,Fischler:1998st,Bousso:2002ju,Cohen:1998zx,Li:2004rb,Wang:2016och}. The choice of $L$ decides the dynamics: the Hubble radius cannot accelerate the Universe~\cite{Hsu:2004ri}, and the future event horizon does so but depends on the whole future evolution of the Universe~\cite{Li:2004rb,Huang:2004ai}. Granda and Oliveros (GO) proposed the local cutoff $L^{-2}=\alpha H^2+\gamma\dot H$ (their $\beta$ is our $\gamma$)~\cite{Granda:2008dk,Granda:2008tm}. Karami and Fehri extended it to a curved universe by keeping the same form of the cutoff and adding the curvature density to the Friedmann equation~\cite{Karami:2009gof}. They solved the model without data. Curvature has also been added to Barrow HDE with a GO cutoff in the modified Friedmann equations of Barrow cosmology~\cite{Motaghi:2024gom}, and to Barrow HDE with other cutoffs~\cite{Adhikary:2021xym,Luciano:2022ffn}; in Barrow cosmology the GO cutoff itself was later adapted to the modified Friedmann equation, in a flat universe~\cite{Mahmoudifard:2025pdu}. NTHDE itself was introduced with the future event horizon in a flat universe~\cite{Pandey:2021fvr}.

Generalized horizon entropies modify HDE further: the power-law Tsallis and Barrow forms~\cite{Tavayef:2018xwx,Saridakis:2018unr,Nojiri:2022aof} and the exponential Tsallis entropy behind NTHDE~\cite{Moradpour:2020dfm,Pandey:2021fvr}. Power-law models with the GO cutoff were recently fitted to PantheonPlus, Union3, chronometers and DESI DR2 in a flat universe~\cite{Luciano:2026hints}, as were Barrow and Tsallis HDE with the event horizon~\cite{Luciano:2025jheap} and HDE built on the Viaggiu entropy~\cite{Halder:2026plb}, introduced in Ref.~\cite{Saha:2026npb}; in all of these the information criteria or the evidence find at most a weak preference for the holographic model, and usually none. The flat NTHDE-GO model, solved as an autonomous system, has three exact properties~\cite{Pandey:2026flat}: without the Tsallis deformation its expansion rate is exactly that of flat $w$CDM, so the matter-era dark-energy fraction is a blind direction of late-time data; nucleosynthesis locks that direction; and trajectories with $w_d>-1$ reach a floor of the density in finite time. The holographic bound $\rho L^4\lesssim S$ is a statement about the number of degrees of freedom of a horizon, and the NTHDE density is its realization for the exponential Tsallis entropy; holographic models with other generalized entropies, such as the Viaggiu entropy, have been studied in the same spirit~\cite{Saha:2026npb}.

This paper asks what curvature does to that picture. Its new elements are: (i) the splitting of the curved GO-HDE solution of Karami and Fehri~\cite{Karami:2009gof} into a matter-like and a curvature-like part, and the resulting curvature tracker of the dark energy; (ii) a curvature consistency ratio $\Rk$ that measures how much the expansion rate overstates the magnitude of the spatial curvature because the GO cutoff lacks the curvature term of the Ricci scalar, with the Ricci completion as the exception; (iii) the proof that the BBN lock and the fold singularity of the flat model survive curvature; (iv) a first data analysis of curved GO-NTHDE with explicit prior-volume, tension and robustness checks. Our main data result is that current late-time data are consistent with a flat universe in this model, $\Omega_{k0}=\PostOkGONTHDEK$, and cannot yet measure $\Rk$. Every equation is either taken from the cited literature or derived step by step from earlier numbered equations, so that the paper can be followed without prior experience of holographic dark energy.

The paper is organized as follows. Section~\ref{sec:model} builds the model and derives its equations and exact results. Section~\ref{sec:data} defines the data and likelihoods, Sect.~\ref{sec:stats} the statistical methods, Sect.~\ref{sec:results} presents the results and Sect.~\ref{sec:conclusions} concludes. Table~\ref{tab:symbols} lists the symbols; every figure and table is placed next to the text that first uses it.

\begin{table}[H]
\centering\small
\caption{Symbols used in this paper. Units: $\hbar=k_B=1$; $c_0$ denotes the speed of light where needed. Curvature convention: $\Omega_{k0}=-k/H_0^2$, so $\Omega_{k0}>0$ is an open universe.}
\label{tab:symbols}
\begin{tabular}{p{0.30\textwidth}p{0.64\textwidth}}
\toprule
Symbol & Meaning\\
\midrule
$a$, $z$, $x$ & scale factor ($a_0=1$ today), redshift $1+z=1/a$, e-folding variable $x=\ln a$\\
$H$, $H_0$, $E$ & Hubble rate $\dot a/a$, its present value, $E\equiv H/H_0$; a prime is ${\rm d}/{\rm d}x$\\
$k$, $\Omega_k$, $\Ok$ & spatial curvature constant; $\Omega_k=-k/(a^2H^2)$; value today $-k/H_0^2$\\
$\Mp$, $G$ & reduced Planck mass $\Mp^2=(8\pi G)^{-1}$, Newton's constant\\
$L$ & infrared cutoff length of holographic dark energy\\
$\SBH$, $S_T$ & Bekenstein--Hawking and Tsallis entropies of a horizon of radius $L$\\
$\delta$, $n$ & Tsallis parameter $\delta=1-n$ ($\delta\to0$: standard entropy)\\
$c$ & holographic constant; absorbed into $\alpha,\gamma,b$ (set to 1, Sect.~\ref{sec:dimless})\\
$\alpha$, $\gamma$, $\kappa$ & GO cutoff constants, $L^{-2}=\alpha H^2+\gamma\dot H+\kappa\,k/a^2$; $\kappa=0$: GO; $\kappa=\gamma$, $\alpha=2\gamma$: Ricci\\
$b$, $u$, $u_0$ & dimensionless deformation $b=8\pi^2c^2\delta\Mp^2/H_0^2$; $u=\delta\SBH=b/X$; its value today\\
$g(u)$, $\keff(u)$ & $g=(e^u-1)/u$; entropy slope $\keff={\rm d}\ln S_T/{\rm d}\ln\SBH$\\
$u_*$ & fold of the NTHDE branch, $\keff(u_*)=2$, $u_*=1.5936$\\
$\Omega_{m0}$, $\Omega_d$ & matter density today; dark-energy fraction $\rho_d/(3\Mp^2H^2)$\\
$\Omt$, $\Okt$ & matter-like and curvature-like coefficients of $a^{-3}$ and $a^{-2}$ in $H^2/H_0^2$ (Prop.~\ref{prop:curved})\\
$\Rk$ & curvature consistency ratio $\Okt/\Ok$\\
$\Odm$, $\Odr$, $\Odk$ & dark-energy fraction during matter, radiation and curvature domination\\
$w_d$, $\wgo$, $q$ & dark-energy equation of state, GO-mode equation of state, deceleration parameter\\
$\Delta N_\nu$ & extra effective neutrino species at nucleosynthesis\\
$D_M$, $D_H$, $D_V$, $D_L$, $\mu$ & transverse comoving, Hubble, volume-averaged and luminosity distances; distance modulus\\
$r_d$ & comoving sound horizon at the baryon-drag epoch\\
$\ln B$ & logarithm of the Bayes factor against flat $\Lambda$CDM\\
\bottomrule
\end{tabular}
\end{table}

\section{The model}
\label{sec:model}

\subsection{Holographic dark energy and the NTHDE density}
\label{sec:hde}
Cohen, Kaplan and Nelson argued that an effective field theory in a region of size $L$ should not contain more energy than a black hole of the same size, $L^3\rho\lesssim\Mp^2L$~\cite{Cohen:1998zx}. Saturating this bound with a dimensionless constant $c$ defines holographic dark energy~\cite{Li:2004rb,Wang:2016och},
\begin{equation}
\rho_d=\frac{3c^2\Mp^2}{L^2}.
\label{eq:hde}
\end{equation}
The bound is an entropy statement. A horizon of radius $L$ has area $A=4\pi L^2$ and Bekenstein--Hawking entropy~\cite{PhysRevD.7.2333,Hawking:1975vcx}
\begin{equation}
\SBH=\frac{A}{4G}=\frac{\pi L^2}{G}=8\pi^2\Mp^2L^2,
\label{eq:sbh}
\end{equation}
where the last step uses $G=(8\pi\Mp^2)^{-1}$. Because $\SBH\propto L^2$, Eq.~\eqref{eq:hde} can be written as $\rho_d=(3c^2\Mp^2/L^2)(S/\SBH)$ with $S=\SBH$. Following Refs.~\cite{Moradpour:2020dfm,Pandey:2021fvr}, a generalized entropy $S$ defines a generalized HDE through the same relation, normalized so that Eq.~\eqref{eq:hde} is recovered for $S\to\SBH$:
\begin{equation}
\rho_d=\frac{3c^2\Mp^2}{L^2}\,\frac{S}{\SBH}.
\label{eq:ghde}
\end{equation}
For a system with $W$ microstates of probabilities $P_i$ the Tsallis entropy is~\cite{Tsallis:1987eu} $S_T=(1-n)^{-1}\sum_i(P_i^{\,n}-P_i)$, which tends to the Gibbs--Shannon entropy for $n\to1$. For $W$ equally likely states, $P_i=1/W$, it equals $S_T=(W^{1-n}-1)/(1-n)$. For a horizon, Boltzmann's relation $\SBH=\ln W$ gives $W=e^{\SBH}$~\cite{Tsallis:2012js,Moradpour:2020dfm}, and with $\delta\equiv1-n$
\begin{equation}
S_T=\frac{e^{\delta\SBH}-1}{\delta}=\SBH+\frac{\delta\,\SBH^2}{2}+\frac{\delta^2\SBH^3}{6}+O(\delta^3).
\label{eq:st}
\end{equation}
Substituting Eqs.~\eqref{eq:sbh} and~\eqref{eq:st} in Eq.~\eqref{eq:ghde} gives the NTHDE density~\cite{Moradpour:2020dfm}
\begin{equation}
\begin{gathered}
\rho_d=\frac{3c^2\Mp^2}{L^2}\,g(u),\qquad g(u)=\frac{e^u-1}{u}=1+\frac u2+\frac{u^2}{6}+O(u^3),\\
u\equiv\delta\SBH=8\pi^2\delta\Mp^2L^2 .
\end{gathered}
\label{eq:rho}
\end{equation}
Keeping three terms of the series (we call this truncation T2) reproduces the three-term density $\rho_d=3c^2\Mp^2L^{-2}+\delta_g\Mp^4+\delta_g^2\Mp^6L^2$ of Ref.~\cite{Pandey:2021fvr} for $c^2=2/9$, $\delta_g=8\pi^2\delta/3$: inserting three terms of the series in Eq.~\eqref{eq:rho} gives coefficients $\tfrac32c^2(8\pi^2\delta)$ and $\tfrac12c^2(8\pi^2\delta)^2$, and equating them with $\delta_g$ and $\delta_g^2$ gives $c^2=2/9$. The expansion parameter is $u=\delta\SBH$, not $\delta$: with $\SBH\sim10^{122}$ for the present horizon, $u=O(1)$ corresponds to $\delta\sim10^{-122}$, so the truncation is not a controlled approximation. We use the full $g(u)$ and keep T2 for comparison.

The logarithmic slope of the Tsallis entropy with respect to $\SBH$ will be needed. Since ${\rm d}S_T/{\rm d}\SBH=e^{\delta\SBH}$,
\begin{equation}
\keff(u)\equiv\frac{{\rm d}\ln S_T}{{\rm d}\ln\SBH}=\frac{u\,e^u}{e^u-1}=1+\frac{u\,g'(u)}{g(u)} .
\label{eq:keff}
\end{equation}
Figure~\ref{fig:tsallis} shows $g$ and $\keff$; the points where $\keff=2$ are the folds of Sect.~\ref{sec:formB}.

\begin{figure}[H]
\centering\pipefig{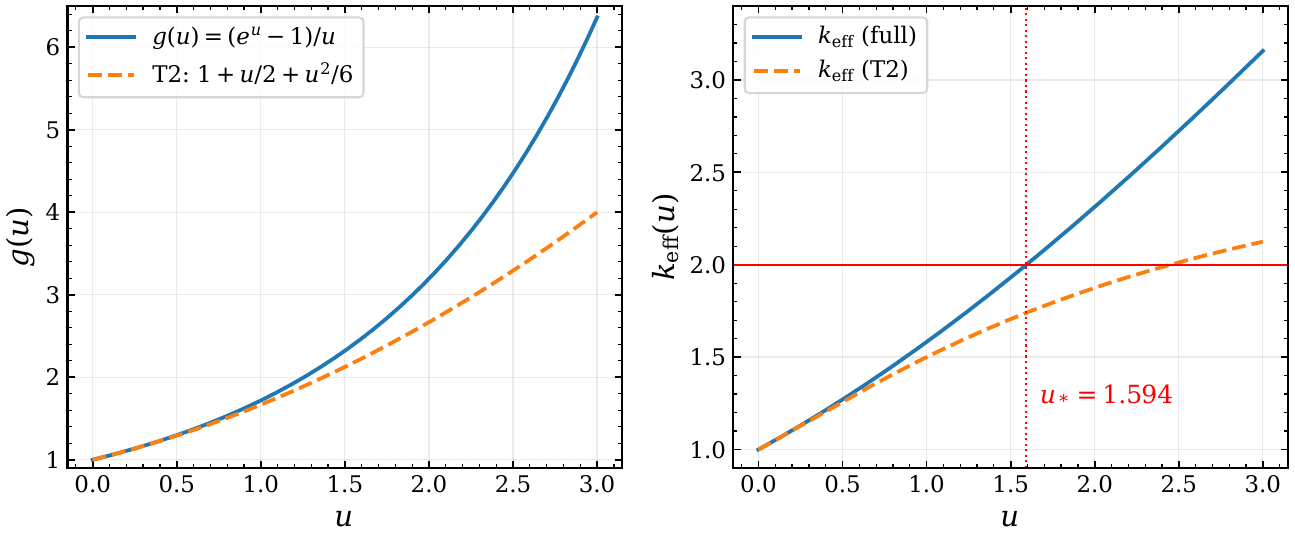}{0.85\textwidth}
\caption{$g(u)$, Eq.~\eqref{eq:rho}, and $\keff(u)$, Eq.~\eqref{eq:keff}, exact and truncated (T2). Vertical dotted line: the fold of the full density, $u_*=1.5936$ [Eq.~\eqref{eq:ustar}]; the T2 fold is at $\sqrt6$. Horizontal line: $\keff=2$.}
\label{fig:tsallis}
\end{figure}

\subsection{Curved universe and the Granda--Oliveros cutoff}
\label{sec:go}
We consider a homogeneous and isotropic universe with spatial curvature constant $k$,
\begin{equation}
{\rm d}s^2=-{\rm d}t^2+a^2(t)\left[\frac{{\rm d}r^2}{1-kr^2}+r^2\,{\rm d}\Omega^2\right],
\label{eq:metric}
\end{equation}
($k=0$ flat, $k>0$ closed, $k<0$ open), filled with pressureless matter and NTHDE. The Friedmann equations are~\cite{Weinberg:2008zzc,baumann2022cosmology}
\begin{align}
3\Mp^2\left(H^2+\frac k{a^2}\right)&=\rho_m+\rho_d, \label{eq:fried1}\\
-2\Mp^2\left(\dot H-\frac k{a^2}\right)&=\rho_m+\rho_d+p_d, \label{eq:fried2}
\end{align}
with $H=\dot a/a$ and $p_d$ the dark-energy pressure. Matter is conserved separately, $\rho_m=\rho_{m0}a^{-3}$. Differentiating Eq.~\eqref{eq:fried1} and using Eq.~\eqref{eq:fried2} gives the conservation law of dark energy,
\begin{equation}
\dot\rho_d+3H(1+w_d)\rho_d=0,\qquad w_d\equiv\frac{p_d}{\rho_d}.
\label{eq:cons}
\end{equation}
We normalize $a(t_0)=1$, use $1+z=1/a$ and $x=\ln a$, write primes for ${\rm d}/{\rm d}x$, and define
\begin{equation}
\begin{gathered}
E\equiv\frac{H}{H_0},\qquad \Omega_{m0}\equiv\frac{\rho_{m0}}{3\Mp^2H_0^2},\qquad \Ok\equiv-\frac{k}{H_0^2},\\
\Omega_d\equiv\frac{\rho_d}{3\Mp^2H^2},\qquad \Omega_k\equiv-\frac{k}{a^2H^2},\qquad q\equiv-1-\frac{\dot H}{H^2}.
\end{gathered}
\label{eq:defs}
\end{equation}
Then $\Omega_m\equiv\rho_{m}/(3\Mp^2H^2)=\Omega_{m0}e^{-3x}/E^2$, $\Omega_k=\Ok e^{-2x}/E^2$, and Eq.~\eqref{eq:fried1} reads
\begin{equation}
\Omega_m+\Omega_d+\Omega_k=1 .
\label{eq:closure}
\end{equation}
Radiation contributes at most about $0.1\%$ of $H^2$ at $z\leq2.5$ and is neglected in the fits; it is added only in the early-universe argument of Sect.~\ref{sec:bbn}.

\textit{The cutoff in a curved universe.} The GO cutoff is~\cite{Granda:2008dk,Granda:2008tm} $L^{-2}=\alpha_{\rm GO}H^2+\gamma_{\rm GO}\dot H$ (we rename GO's $\beta$ to $\gamma_{\rm GO}$ to avoid a clash with $\beta$ in the Tsallis literature). It is built from $H$ and $\dot H$ only. In a curved universe the Ricci scalar is $R=6(\dot H+2H^2+k/a^2)$~\cite{Gao:2007ep,Cai:2008nk}, so the cutoff is not the Ricci cutoff when $\alpha_{\rm GO}=2\gamma_{\rm GO}$, as it is for $k=0$. We therefore consider
\begin{equation}
L^{-2}=\alpha_{\rm GO}H^2+\gamma_{\rm GO}\dot H+\kappa_{\rm GO}\,\frac k{a^2},
\label{eq:gok}
\end{equation}
with two cases: $\kappa_{\rm GO}=0$, the GO cutoff used in curved space by Karami and Fehri~\cite{Karami:2009gof} (our main model), and $\kappa_{\rm GO}=\gamma_{\rm GO}$ with $\alpha_{\rm GO}=2\gamma_{\rm GO}$, the full Ricci-scalar cutoff. The Hubble cutoff ($\gamma_{\rm GO}=\kappa_{\rm GO}=0$) cannot accelerate the Universe~\cite{Hsu:2004ri}.

\subsection{Dimensionless form and the absorption of $c$}
\label{sec:dimless}
Because ${\rm d}/{\rm d}t=H\,{\rm d}/{\rm d}x$,
\begin{equation}
\dot H=H_0^2\,E\,E',\qquad \frac{k}{a^2}=-H_0^2\,\Ok\,e^{-2x}.
\label{eq:hdot}
\end{equation}
Dividing Eq.~\eqref{eq:fried1} by $3\Mp^2H_0^2$, and inserting Eqs.~\eqref{eq:rho}, \eqref{eq:gok} and \eqref{eq:hdot},
\begin{equation}
E^2=\Omega_{m0}e^{-3x}+\Ok e^{-2x}+X\,g\!\left(\frac bX\right),\qquad X\equiv\alpha E^2+\gamma EE'-\kappa\,\Ok e^{-2x},
\label{eq:formA}
\end{equation}
with
\begin{equation}
\alpha\equiv c^2\alpha_{\rm GO},\qquad \gamma\equiv c^2\gamma_{\rm GO},\qquad \kappa\equiv c^2\kappa_{\rm GO},\qquad b\equiv\frac{8\pi^2c^2\delta\Mp^2}{H_0^2}.
\label{eq:absorb}
\end{equation}
To see this, note that $\rho_d/(3\Mp^2H_0^2)=c^2L^{-2}g(u)/H_0^2=Xg(u)$ by Eqs.~\eqref{eq:rho} and~\eqref{eq:gok}, and that $u=8\pi^2\delta\Mp^2L^2=b/X$. The constants $(c,\alpha_{\rm GO},\gamma_{\rm GO},\kappa_{\rm GO},\delta)$ enter only through $(\alpha,\gamma,\kappa,b)$, so $c$ is unobservable and we set $c=1$. Equation~\eqref{eq:formA} is implicit in $E'$.

\subsection{Autonomous system}
\label{sec:formB}
We derive an equivalent explicit system for $(\Omega_d,u,\Omega_k)$ in six steps.

\textit{Step 1 (cutoff fraction).} By Eqs.~\eqref{eq:rho}, \eqref{eq:defs} and~\eqref{eq:formA}, $\Omega_d=Xg(u)/E^2$. Defining $\ell\equiv X/E^2=(LH)^{-2}$,
\begin{equation}
\ell=\frac{\Omega_d}{g(u)}.
\label{eq:ell}
\end{equation}
\textit{Step 2 (deceleration parameter).} Dividing the definition of $X$ by $E^2$ and using $E'/E=\dot H/H^2=-(1+q)$ and $\Ok e^{-2x}/E^2=\Omega_k$ gives $\ell=\alpha-\gamma(1+q)-\kappa\,\Omega_k$. With Eq.~\eqref{eq:ell},
\begin{equation}
q=-1+\frac{\alpha-\kappa\,\Omega_k-\Omega_d/g(u)}{\gamma}.
\label{eq:q}
\end{equation}
\textit{Step 3 (equation of state).} Eq.~\eqref{eq:fried2} divided by $2\Mp^2H^2$ and written with Eq.~\eqref{eq:defs} gives $\dot H/H^2=-\tfrac32\left[\Omega_m+\Omega_d(1+w_d)\right]-\Omega_k$, hence, using Eq.~\eqref{eq:closure},
\begin{equation}
q=\frac12\left(1-\Omega_k+3\Omega_dw_d\right),\qquad w_d=\frac{2q-1+\Omega_k}{3\Omega_d}.
\label{eq:wd}
\end{equation}
\textit{Step 4 (density fractions).} From Eq.~\eqref{eq:cons}, $\rho_d'=-3(1+w_d)\rho_d$, and $H'/H=-(1+q)$, so $\Omega_d'=\Omega_d\left[2q-1-3w_d\right]$. Eliminating $w_d$ with Eq.~\eqref{eq:wd}, and using $\Omega_k'=\Omega_k(-2-2H'/H)$,
\begin{equation}
\Omega_d'=(1-\Omega_d)(1-2q)-\Omega_k,\qquad \Omega_k'=2q\,\Omega_k .
\label{eq:odot}
\end{equation}
The matter fraction obeys $\Omega_m'=\Omega_m(2q-1)$, which is consistent with Eqs.~\eqref{eq:closure} and~\eqref{eq:odot}.

\textit{Step 5 (the deformation).} Writing $\rho_d=3\Mp^2H_0^2\,Xg(b/X)$ and differentiating with respect to $X$ at fixed $b$,
\begin{equation}
\frac{{\rm d}\rho_d}{{\rm d}X}=3\Mp^2H_0^2\left[g(u)-u\,g'(u)\right]=3\Mp^2H_0^2\,g(u)\left[2-\keff(u)\right],
\label{eq:drhodX}
\end{equation}
where the last form uses Eq.~\eqref{eq:keff}. Combined with $\rho_d'=-3(1+w_d)\rho_d$ this gives $X'/X=-3(1+w_d)/(2-\keff)$, and since $u=b/X$,
\begin{equation}
u'=\frac{3u\,(1+w_d)}{2-\keff(u)} .
\label{eq:udot}
\end{equation}
(This is the form of the flat paper~\cite{Pandey:2026flat}; curvature enters only through $w_d$ and $q$. An independent derivation: $u=b/(E^2\ell)$ gives $u'/u=2(1+q)-\ell'/\ell$, with $\ell'/\ell=\Omega_d'/\Omega_d-(\keff-1)u'/u$ and $\Omega_d'/\Omega_d=2q-1-3w_d$; solving for $u'/u$ gives Eq.~\eqref{eq:udot}.)

\textit{Step 6 (initial conditions and expansion rate).} At $x=0$, $E=1$, $\Omega_k=\Ok$, $\Omega_d=1-\Omega_{m0}-\Ok$ and $u=u_0$; Eq.~\eqref{eq:formA} at $x=0$ gives $X_0g(u_0)=\Omega_d(0)$, so
\begin{equation}
\Omega_d(0)=1-\Omega_{m0}-\Ok,\quad \Omega_k(0)=\Ok,\quad u(0)=u_0,\quad b=\frac{\Omega_d(0)\,u_0}{g(u_0)} .
\label{eq:ic}
\end{equation}
Eq.~\eqref{eq:closure} then gives the expansion rate,
\begin{equation}
E^2(x)=\frac{\Omega_{m0}\,e^{-3x}+\Ok\,e^{-2x}}{1-\Omega_d(x)} .
\label{eq:Erec}
\end{equation}
Equations~\eqref{eq:q}, \eqref{eq:wd}, \eqref{eq:odot} and~\eqref{eq:udot} form a closed autonomous system. For $\Ok=0$ it reduces to the flat system~\cite{Pandey:2026flat}; for $b=0$ ($u\equiv0$) it describes the GO-HDE model of Karami and Fehri~\cite{Karami:2009gof}, whose closed-form solution is given in Proposition~\ref{prop:curved}.

\textit{Branch and fold.} By Eq.~\eqref{eq:drhodX}, $\rho_d$ increases with $X$ only while $\keff(u)<2$. Since $\keff$ increases monotonically, this holds for $u<u_*$, with
\begin{equation}
\keff(u_*)=2\iff e^{u_*}(u_*-2)+2=0\iff u_*=1.5936 .
\label{eq:ustar}
\end{equation}
This is the branch connected to HDE ($u\to0$). The density $\rho_d=3\Mp^2H_0^2\,b\,\varphi(u)$, $\varphi=g/u$, has the floor $\rho_d\geq3\Mp^2H_0^2\,b\,\varphi(u_*)$ with $\varphi(u_*)=1.5441$.

\subsection{Tracking solutions}
\label{sec:fixed}
\textit{Matter era.} During matter domination $q=\tfrac12$ and $u\to0$, so $g\to1$ and Eq.~\eqref{eq:q} gives
\begin{equation}
\Omega_d\to\Odm\equiv\alpha-\tfrac32\gamma ,
\label{eq:Odm}
\end{equation}
which is stationary under Eq.~\eqref{eq:odot} because $1-2q=0$ and $\Omega_k\to0$. With radiation added, $q=1$ and, in the same way,
\begin{equation}
\Omega_d\to\Odr\equiv\alpha-2\gamma .
\label{eq:Odr}
\end{equation}

\textit{Curvature era.} If curvature dominates the expansion, $q=0$. Eq.~\eqref{eq:q} then gives, for $\kappa=0$ and $g\to1$,
\begin{equation}
\Omega_d\to\Odk\equiv\alpha-\gamma,\qquad \Omega_k\to1-\Odk,\qquad w_d\to-\tfrac13 ,
\label{eq:Odk}
\end{equation}
where $w_d$ follows from Eq.~\eqref{eq:wd}: $(2q-1+\Omega_k)/(3\Omega_d)=(-\alpha+\gamma)/(3(\alpha-\gamma))$. The dark energy tracks curvature as it tracks matter. Equation~\eqref{eq:udot} then gives $u'=2u$, i.e.
\begin{equation}
u\propto a^3\ \text{(matter)},\qquad u\propto a^4\ \text{(radiation)},\qquad u\propto a^2\ \text{(curvature)},
\label{eq:uscale}
\end{equation}
so the Tsallis deformation affects only the late Universe. The curvature era is reached only if $|\Ok|$ is large enough for curvature to dominate over both matter and dark energy at some epoch; for the small $|\Ok|$ allowed by the data it is a mathematical property of the system (Fig.~\ref{fig:tracker}) rather than a physical epoch, but the coefficient it fixes, $1/(1-\alpha+\gamma)$, controls the observable $H(z)$ at all redshifts (Proposition~\ref{prop:curved}).

\begin{figure}[H]
\centering\pipefig{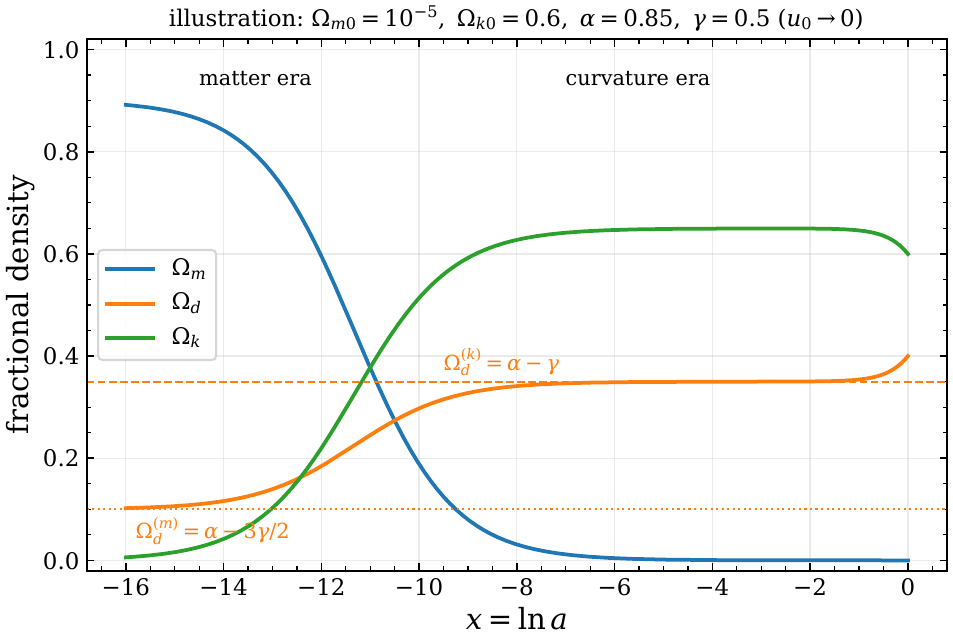}{0.78\textwidth}
\caption{Matter, dark-energy and curvature fractions against $\ln a$ for an illustrative point with a long curvature-dominated era, $\Omega_{m0}=10^{-5}$, $\Ok=0.6$, $\alpha=0.85$, $\gamma=0.5$ (i.e.\ $\wgo=-1.2$, $\Odm=0.1$, $\Odk=0.35$; $u_0\to0$), showing the matter tracker $\Odm=\alpha-3\gamma/2$ (dotted) and the curvature tracker $\Odk=\alpha-\gamma$ (dashed), Eqs.~\eqref{eq:Odm} and~\eqref{eq:Odk}.}
\label{fig:tracker}
\end{figure}

\subsection{Exact results}
\label{sec:exact}

\begin{proposition}[Curved GO-HDE in closed form]
\label{prop:curved}
For $b=0$ and $1-\alpha+\gamma\neq0$, the solution of Eq.~\eqref{eq:formA} with $E(0)=1$ is
\begin{equation}
\begin{gathered}
E^2=\Omt\,a^{-3}+\Okt\,a^{-2}+(1-\Omt-\Okt)\,a^{k},\\
\Omt=\frac{\Omega_{m0}}{1-\Odm},\qquad \Okt=\Rk\,\Ok,\qquad \Rk=\frac{1-\kappa}{1-\alpha+\gamma},
\end{gathered}
\label{eq:Etilde}
\end{equation}
with $k=2(1-\alpha)/\gamma=-3(1+\wgo)$, i.e.\ $\wgo=-1-2(1-\alpha)/(3\gamma)$.
\end{proposition}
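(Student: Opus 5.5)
For $b=0$ we have $g\equiv1$, so Eq.~\eqref{eq:formA} becomes a linear first-order ODE for $Y\equiv E^2$. Since $EE'=\tfrac12 Y'$, it reads
\begin{equation*}
(1-\alpha)\,Y-\tfrac{\gamma}{2}\,Y'=\Omega_{m0}e^{-3x}+(1-\kappa)\,\Ok e^{-2x}.
\end{equation*}
The plan is to solve this by superposition. We take one particular solution for each exponential source and add the general homogeneous solution, then fix the free constant with $E(0)=1$. Linearity makes this exhaustive: every solution with $Y(0)=1$ has exactly this form, so no uniqueness argument beyond the standard one for linear ODEs is needed.

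\textbf{Homogeneous part.} The equation $(1-\alpha)Y=\tfrac{\gamma}{2}Y'$ gives $Y\propto e^{2(1-\alpha)x/\gamma}=a^{k}$, with $k=2(1-\alpha)/\gamma$. Writing $a^{k}=a^{-3(1+\wgo)}$ identifies $\wgo=-1-2(1-\alpha)/(3\gamma)$.

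\textbf{Matter term.} Try $Y=\Omt e^{-3x}$. Substituting gives $\Omt\,(1-\alpha+\tfrac32\gamma)=\Omega_{m0}$. Since $1-\alpha+\tfrac32\gamma=1-\Odm$ by Eq.~\eqref{eq:Odm}, this is $\Omt=\Omega_{m0}/(1-\Odm)$.

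\textbf{Curvature term.} Try $Y=\Okt e^{-2x}$. Substituting gives $\Okt\,(1-\alpha+\gamma)=(1-\kappa)\Ok$, so $\Okt=\Rk\Ok$ with $\Rk=(1-\kappa)/(1-\alpha+\gamma)$. This is where the hypothesis $1-\alpha+\gamma\neq0$ is used.

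\textbf{Integration constant.} The homogeneous coefficient is then fixed by $Y(0)=1$ as $1-\Omt-\Okt$, which yields Eq.~\eqref{eq:Etilde}.

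\textbf{Main obstacle.} The only real subtleties are the resonant cases, where a source exponent coincides with the homogeneous exponent. If $k=-2$, equivalently $1-\alpha+\gamma=0$, the curvature trial solution fails and one would get $x\,e^{-2x}$ terms; this case is excluded by hypothesis. The analogous case $k=-3$, i.e.\ $\Odm=1$, makes $\Omt$ itself ill-defined. So I would note that the formula implicitly requires $\Odm\neq1$ as well, which holds at all physical points. Finally, I would check the result against Eq.~\eqref{eq:Erec}: compute $\Omega_d=1-\Omega_m-\Omega_k$ from the solution and confirm that it tends to $\Odm$ and $\Odk$ in the tracking regimes of Sect.~\ref{sec:fixed}.
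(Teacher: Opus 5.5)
Your proposal is correct and follows the paper's proof step for step. You reduce Eq.~\eqref{eq:formA} at $b=0$ to the linear equation for $E^2$, take exponential particular solutions for the $e^{-3x}$ and $e^{-2x}$ sources (the paper does both at once with $y_p=Ae^{-3x}+Be^{-2x}$), add the homogeneous $Ce^{kx}$ and fix $C$ by $E(0)=1$. Your remark that $\Odm\neq1$ (no resonance at $k=-3$) is needed for $\Omt$ to be defined is a fair addition, since the paper leaves it unstated.
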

\begin{proof}
With $b=0$, $g=1$ and $y\equiv E^2$, so that $EE'=y'/2$, Eq.~\eqref{eq:formA} becomes the linear equation
\begin{equation}
\frac\gamma2\,y'+(\alpha-1)\,y=-\Omega_{m0}e^{-3x}-(1-\kappa)\,\Ok\,e^{-2x}.
\label{eq:linear}
\end{equation}
Trying $y_p=Ae^{-3x}+Be^{-2x}$: the $e^{-3x}$ terms give $A(\alpha-1-\tfrac32\gamma)=-\Omega_{m0}$, so $A=\Omega_{m0}/(1-\Odm)=\Omt$ with Eq.~\eqref{eq:Odm}; the $e^{-2x}$ terms give $B(\alpha-1-\gamma)=-(1-\kappa)\Ok$, so $B=(1-\kappa)\Ok/(1-\alpha+\gamma)=\Rk\Ok$. The homogeneous solution is $Ce^{kx}$ with $k=2(1-\alpha)/\gamma$, and $y(0)=1$ fixes $C=1-\Omt-\Okt$. Since $e^x=a$ this is Eq.~\eqref{eq:Etilde}.
\end{proof}
For $\kappa=0$, Eq.~\eqref{eq:Etilde} is Eq.~(6) of Karami and Fehri~\cite{Karami:2009gof} with their $\beta=\gamma$ and without radiation. In the parameters used below,
\begin{equation}
\Rk\Big|_{\kappa=0}=\frac{3|\wgo|}{(1-\Odm)\,(3|\wgo|-1)},
\label{eq:Rkphys}
\end{equation}
which follows from $1-\alpha+\gamma=(1-\Odm)(3|\wgo|-1)/(3|\wgo|)$ [Eq.~\eqref{eq:map} below]. Subtracting the matter and curvature terms of Eq.~\eqref{eq:formA} from Eq.~\eqref{eq:Etilde}, and using $\Omt-\Omega_{m0}=\Omt\,\Odm$ (from the definition of $\Omt$), gives the dark-energy density of curved GO-HDE:
\begin{equation}
\frac{\rho_d}{3\Mp^2H_0^2}=\Omt\,\Odm\,a^{-3}+(\Rk-1)\,\Ok\,a^{-2}+(1-\Omt-\Okt)\,a^{k}.
\label{eq:rhodecomp}
\end{equation}
The GO dark energy therefore contains a matter-like part, which is the origin of the matter tracker, and a curvature-like part of weight $\Rk-1$, which is the origin of the curvature tracker: if curvature dominates, $\Omega_d\to(\Rk-1)\Ok/(\Rk\Ok)=1-1/\Rk=\alpha-\gamma$ for $\kappa=0$, which is Eq.~\eqref{eq:Odk}. The curvature-like part has the sign of $\Ok$. In a closed universe it is negative, and the holographic density stays positive ($L^2>0$) only where the matter-like and GO-mode parts outweigh it; parameter points with $\Omega_d\le0$ anywhere in the data range are therefore unphysical and are rejected (Sect.~\ref{sec:numerics}). For an open universe no such restriction arises.

\begin{proposition}[Curvature consistency ratio]
\label{prop:ratio}
(i) For $\kappa=0$ and $|\wgo|>1/3$, $\Rk>1$ for every $\Odm\in[0,1)$. (ii) For the Ricci cutoff ($\kappa=\gamma$, $\alpha=2\gamma$), $\Rk=1$ exactly.
\end{proposition}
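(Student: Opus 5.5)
Both parts follow directly from the expression $\Rk=(1-\kappa)/(1-\alpha+\gamma)$ of Proposition~\ref{prop:curved}, so the plan is to evaluate the denominator in each case.

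For part (ii) I substitute the Ricci values $\kappa=\gamma$ and $\alpha=2\gamma$. The numerator becomes $1-\gamma$ and the denominator becomes $1-2\gamma+\gamma=1-\gamma$. Hence $\Rk=1$ whenever $\gamma\neq1$, and $\gamma\neq1$ is exactly the condition $1-\alpha+\gamma\neq0$ assumed in Proposition~\ref{prop:curved}. The same cancellation also shows, through Eq.~\eqref{eq:rhodecomp}, that the curvature-like part of $\rho_d$ vanishes for the Ricci cutoff.

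For part (i), with $\kappa=0$, I will express $\alpha$ and $\gamma$ in terms of $\Odm$ and $\wgo$ by inverting two relations. The first is $\Odm=\alpha-\tfrac32\gamma$ from Eq.~\eqref{eq:Odm}. The second is $1-\alpha=\tfrac32\gamma|1+\wgo|$-type relation, written exactly as $\wgo=-1-2(1-\alpha)/(3\gamma)$ from Proposition~\ref{prop:curved}.

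\begin{itemize}
\item From the second relation, $1-\alpha=-\tfrac32\gamma(1+\wgo)$.
\item Combining with the first, $1-\Odm=(1-\alpha)+\tfrac32\gamma=-\tfrac32\gamma\wgo$.
\item Therefore $\gamma=2(1-\Odm)/(3|\wgo|)$, taking $\wgo<0$ as the physical branch.
\item Then $1-\alpha+\gamma=(1-\Odm)-\tfrac12\gamma=(1-\Odm)\bigl(1-1/(3|\wgo|)\bigr)$, which is the identity behind Eq.~\eqref{eq:Rkphys}.
\end{itemize}

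This gives $\Rk=\dfrac{3|\wgo|}{(1-\Odm)(3|\wgo|-1)}$. For $\Odm\in[0,1)$ and $|\wgo|>1/3$, the factor $1/(1-\Odm)$ is at least $1$, and $3|\wgo|/(3|\wgo|-1)$ is strictly greater than $1$ because its denominator is positive and smaller than its numerator. The product therefore exceeds $1$.

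The main point needing care is the sign convention for $\wgo$. The identity $-\tfrac32\gamma\wgo=1-\Odm>0$ forces $\gamma$ and $\wgo$ to have opposite signs. So $|\wgo|$ means $-\wgo$ on the physical branch with $\gamma>0$, which is the accelerating GO mode. I will state explicitly that $|\wgo|>1/3$ is read as $\wgo<-1/3$. If instead $\gamma<0$ with $\wgo>1/3$, the same computation gives $1-\alpha+\gamma=(1-\Odm)(1+1/(3\wgo))$ and $\Rk<1$, so excluding that branch is necessary.

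Finally, I will note that the positive denominator $(1-\Odm)(3|\wgo|-1)$ guarantees $1-\alpha+\gamma\neq0$. This is the hypothesis Proposition~\ref{prop:curved} needs, so its formula for $\Rk$ applies at every physical point.
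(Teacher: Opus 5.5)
Your proof is correct and follows the paper's argument. Part (ii) is the same substitution into $\Rk=(1-\kappa)/(1-\alpha+\gamma)$. Part (i) is the same factorization $\Rk=\frac{1}{1-\Odm}\cdot\frac{3|\wgo|}{3|\wgo|-1}$, which you derive explicitly from Eq.~\eqref{eq:Odm} and the definition of $\wgo$, where the paper simply cites Eq.~\eqref{eq:map}.

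One small correction to your side remark: on the excluded branch $\wgo>1/3$ you get $\Rk=3\wgo/[(1-\Odm)(3\wgo+1)]$. This is below 1 only for $\Odm<1/(1+3\wgo)$, for example at $\Odm=0$, not for every $\Odm$. That is still enough to show the branch must be excluded.
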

\begin{proof}
(i) By Eq.~\eqref{eq:Rkphys}, $\Rk=\frac{1}{1-\Odm}\cdot\frac{3|\wgo|}{3|\wgo|-1}$, a product of two factors that are each at least 1 for $\Odm\ge0$ and $|\wgo|>1/3$, and the second is strictly larger than 1. (ii) Inserting $\kappa=\gamma$, $\alpha=2\gamma$ in Eq.~\eqref{eq:Etilde}: $\Rk=(1-\gamma)/(1-2\gamma+\gamma)=1$.
\end{proof}
The physical meaning of Proposition~\ref{prop:ratio} is the central message of this paper. The distance integrals of Sect.~\ref{sec:data} use the \emph{geometric} curvature $\Ok$, because the spatial geometry is that of Eq.~\eqref{eq:metric}. The expansion rate $H(z)$, however, carries the \emph{dynamical} coefficient $\Okt=\Rk\Ok$ of $(1+z)^2$. In $\Lambda$CDM, $w$CDM and CPL the two coincide. In GO-HDE they differ by the factor $\Rk$, because a cutoff built from $H$ and $\dot H$ alone does not contain the curvature term $k/a^2$ that completes $\dot H+2H^2$ to the Ricci scalar: the dark energy responds to $k/a^2$ only indirectly, through $\dot H$, and this response adds the curvature-like part $(\Rk-1)\Ok a^{-2}$ of Eq.~\eqref{eq:rhodecomp} to $H^2$. Completing the cutoff with the Ricci term restores $\Rk=1$. On the nucleosynthesis surface of Proposition~\ref{prop:bbn} below,
\begin{equation}
\Rk\Big|_{\kappa=0,\ \Odm=1/(1+3|\wgo|)}=\frac{1+3|\wgo|}{3|\wgo|-1},
\label{eq:Rkbbn}
\end{equation}
which equals $2$ for $\wgo=-1$ [from Eq.~\eqref{eq:Rkphys} with $1-\Odm=3|\wgo|/(1+3|\wgo|)$]. Figure~\ref{fig:ratio} maps $\Rk$.

\begin{figure}[H]
\centering\pipefig{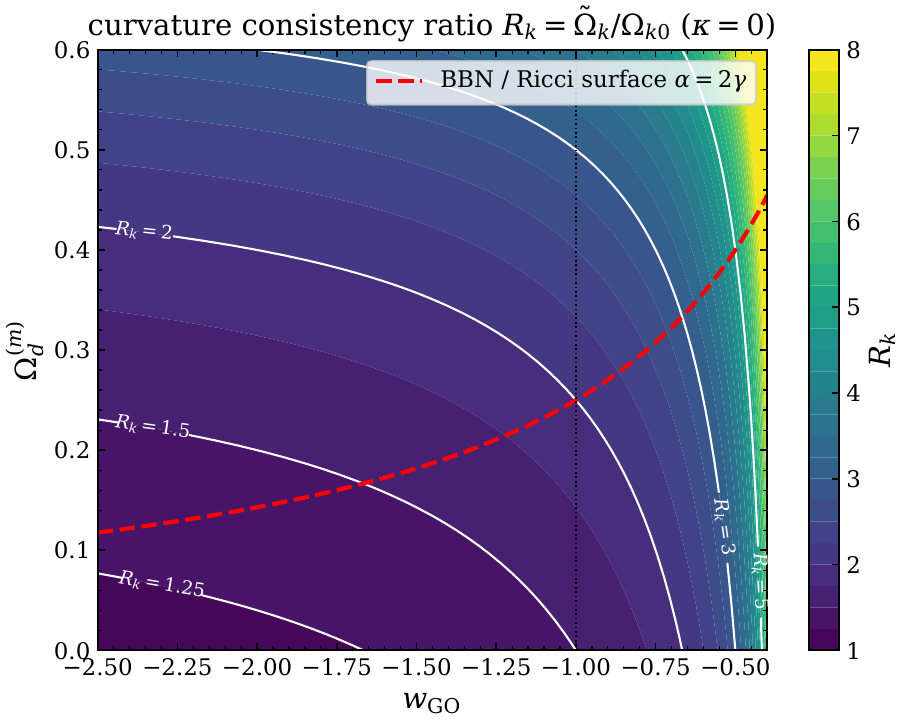}{0.72\textwidth}
\caption{Curvature consistency ratio $\Rk=\Okt/\Ok$, Eq.~\eqref{eq:Rkphys}, in the $(\wgo,\Odm)$ plane for $\kappa=0$ (colour, clipped at 8; white contours $\Rk=1.25,1.5,2,3,5$). $\Rk>1$ everywhere (Proposition~\ref{prop:ratio}); its smallest value in the box, $1.15$, is at $\wgo=-2.5$, $\Odm=0$, and $\Rk\to\infty$ as $|\wgo|\to1/3$. Dashed red line: the nucleosynthesis (Ricci) surface $\Odm=1/(1+3|\wgo|)$, Eq.~\eqref{eq:lock}, on which $\Rk=(1+3|\wgo|)/(3|\wgo|-1)$; dotted line: $\wgo=-1$.}
\label{fig:ratio}
\end{figure}

\textit{The Clarkson test cannot see $\Rk$.} For any FLRW metric with $D\equiv H_0D_M/c_0$ and $D'\equiv{\rm d}D/{\rm d}z$,
\begin{equation}
\frac{E^2D'^2-1}{D^2}=\Ok,
\label{eq:clarkson}
\end{equation}
independently of the dark energy, because $D=\sinh(\sqrt{\Ok}\,\chi)/\sqrt{\Ok}$ with $\chi=\int{\rm d}z/E$ gives $D'=\cosh(\sqrt{\Ok}\chi)/E$ and $\cosh^2-1=\sinh^2$ (for $\Ok<0$ replace $\sinh\to\sin$, $\cosh\to\cos$). The curvature test of Clarkson, Bassett and Lu~\cite{Clarkson:2007pz} therefore returns the geometric $\Ok$ in every model, including this one. What is specific to GO holography is that the coefficient of $(1+z)^2$ in the \emph{same} $E(z)$ is $\Rk\Ok$; a fit that assumes $\Okt=\Ok$ (as in $w$CDM with curvature) recovers neither value correctly. Figure~\ref{fig:clarkson} shows both.

\begin{figure}[H]
\centering\pipefig{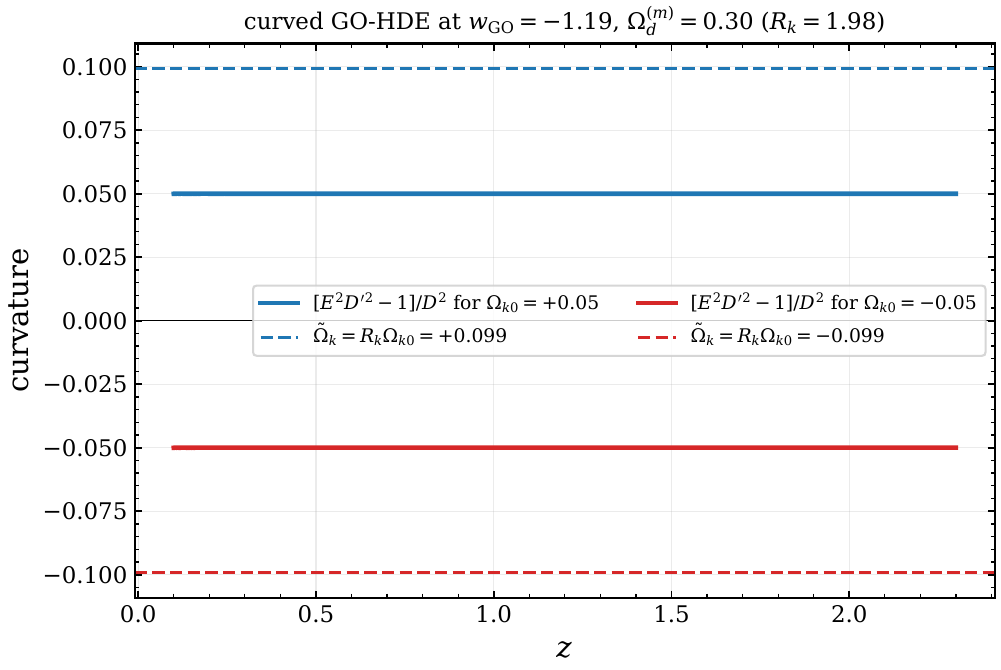}{0.72\textwidth}
\caption{Clarkson test function $[E^2D'^2-1]/D^2$, computed numerically from $E(z)$ and $D(z)$ of curved GO-HDE with $\wgo$ and $\Odm$ at the joint best fit, for an open ($\Ok=+0.05$, blue) and a closed ($\Ok=-0.05$, red) universe (solid lines: constant and equal to $\Ok$, Eq.~\eqref{eq:clarkson}). Dashed: the coefficient $\Okt=\Rk\Ok$ of $(1+z)^2$ in $E^2(z)$ of the same models.}
\label{fig:clarkson}
\end{figure}

\begin{proposition}[First order in the deformation]
\label{prop:first}
To first order in $b$,
\begin{equation}
E^2=E^2_{\rm GO\text{-}HDE}+\frac{b}{2(1-\alpha)}\left(1-a^{k}\right)+O(b^2),
\label{eq:firstorder}
\end{equation}
independently of the curvature.
\end{proposition}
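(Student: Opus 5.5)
The plan is to repeat the proof of Proposition~\ref{prop:curved}, now keeping the deformation to first order. I expand the dark-energy term of Eq.~\eqref{eq:formA} in $b$, which again gives a linear first-order equation for $y\equiv E^2$. The correction to the GO-HDE solution then comes from a constant source, which never meets the curvature term.

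\textit{Expansion of the source.} By the series in Eq.~\eqref{eq:rho},
\[
X\,g(b/X)=X+\tfrac{b}{2}+\tfrac{b^2}{6X}+\dots,
\]
so $X\,g(b/X)=X+b/2+O(b^2)$ wherever $X$ is bounded away from zero. Inserting this in Eq.~\eqref{eq:formA}, with $X=\alpha y+\tfrac{\gamma}{2}y'-\kappa\Ok e^{-2x}$, gives
\[
\frac\gamma2\,y'+(\alpha-1)\,y=-\Omega_{m0}e^{-3x}-(1-\kappa)\Ok e^{-2x}-\frac b2+O(b^2).
\]
This is Eq.~\eqref{eq:linear} with the extra constant source $-b/2$. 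I hold $\Omega_{m0}$, $\Ok$, $\alpha$, $\gamma$, $\kappa$ fixed and impose $E(0)=1$. With this convention $b$ enters only through the new constant source.

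\textit{Solving for the correction.} Write $y=y_{\rm GO\text{-}HDE}+\delta y$. Here $y_{\rm GO\text{-}HDE}$ is Eq.~\eqref{eq:Etilde}, which already solves the $b=0$ equation with $y(0)=1$. By linearity $\delta y$ then obeys
\[
\frac\gamma2\,\delta y'+(\alpha-1)\,\delta y=-\frac b2,\qquad \delta y(0)=0 .
\]
\begin{itemize}
\item For $\alpha\neq1$ a constant particular solution is $\delta y_p=b/[2(1-\alpha)]$.
\item The homogeneous solution is again $Ce^{kx}$ with $k=2(1-\alpha)/\gamma$.
\item The initial condition fixes $C=-b/[2(1-\alpha)]$.
\end{itemize}
This gives $\delta y=\frac{b}{2(1-\alpha)}(1-a^{k})$, which is Eq.~\eqref{eq:firstorder}. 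The curvature parameters $\Ok$ and $\kappa$ appear only in the inhomogeneous terms already absorbed into $y_{\rm GO\text{-}HDE}$, so the correction is independent of curvature. This is the claim; it coincides with the flat result of Ref.~\cite{Pandey:2026flat}.

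\textit{Main obstacle.} The step needing care is the control of the $O(b^2)$ remainder. The next term is $b^2/(6X)$, with $X=E^2\ell$ and $\ell=(LH)^{-2}$. It is uniformly small only on an interval where $X$ stays bounded below, i.e.\ away from the fold and from points where $\Omega_d\to0$.

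I would argue as follows. On a compact $x$-interval on which the GO-HDE solution has $X_{\rm GO\text{-}HDE}\ge X_{\min}>0$, the exact equation is a smooth perturbation of the linear one. It is explicit in $y'$ for $\gamma\neq0$, since $\partial(Xg)/\partial X=g(2-\keff)$ is positive for $u<u_*$ by Eq.~\eqref{eq:drhodX}. Standard smooth dependence of ODE solutions on parameters then makes the remainder $O(b^2)$ uniformly on that interval.

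Two side conditions should be stated with the result. First, $\alpha\neq1$ is needed; the limit $\alpha\to1$ gives the resonant term $-\tfrac{b}{\gamma}x$. Second, the expansion is in $b$ at fixed $\Omega_{m0}$ and $\Ok$, not at fixed $u_0$.
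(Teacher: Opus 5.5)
Your proposal is correct and follows the paper's proof. Both expand $Xg(b/X)=X+b/2+O(b^2)$, treat $-b/2$ as an extra constant source in Eq.~\eqref{eq:linear}, take the constant particular solution $b/[2(1-\alpha)]$, and fix the homogeneous term $Ce^{kx}$ by $y(0)=1$. Your additional remarks are sound refinements that the paper leaves implicit: uniform control of the $O(b^2)$ remainder where $X$ stays bounded below, the resonant case $\alpha=1$ giving $-bx/\gamma$, and expanding at fixed $\Omega_{m0}$ rather than at fixed $u_0$.
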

\begin{proof}
For small $u=b/X$, $g\simeq1+u/2$ and $Xg(b/X)\simeq X+b/2$. The constant $b/2$ on the right-hand side of Eq.~\eqref{eq:linear} adds the particular solution $D=b/(2(1-\alpha))$ [from $(\alpha-1)D=-b/2$]; with $y(0)=1$ the constant $C$ changes by $-D$, which gives Eq.~\eqref{eq:firstorder}.
\end{proof}
At first order the Tsallis correction acts as a cosmological constant whose weight relative to matter decays as $(1+z)^{-3}$; it does not know about curvature. This explains why $u_0$ and curvature are separately weakly constrained.

\textit{Parameter map.} Equations~\eqref{eq:Odm} and~\eqref{eq:Etilde} show that late-time data measure $(\Omt,\wgo,\Ok)$ and the combination $\Rk\Ok$, while $\Odm$ appears only through $\Omt$ and $\Rk$ [Eq.~\eqref{eq:Rkphys}]. Unlike the flat case, $\Odm$ is therefore not a perfectly blind direction when $\Ok\neq0$: it changes $\Okt$ at fixed $\Ok$. We fit the physical set $(\Omt,\Ok,\wgo,\Odm,u_0)$ and recover $\alpha,\gamma$ by inverting Eqs.~\eqref{eq:Odm} and~\eqref{eq:Etilde}:
\begin{equation}
\gamma=\frac{2(1-\Odm)}{3|\wgo|},\qquad \alpha=1+\tfrac32\gamma(1+\wgo),\qquad \Omega_{m0}=\Omt(1-\Odm).
\label{eq:map}
\end{equation}
(Check: Eq.~\eqref{eq:Odm} gives $\Odm=\alpha-\tfrac32\gamma$; inserting $\alpha$ gives $\Odm=1+\tfrac32\gamma\wgo$, i.e.\ the first relation for $\wgo<0$.) The prior $\wgo<-0.4$ keeps $1-\alpha+\gamma=(1-\Odm)(3|\wgo|-1)/(3|\wgo|)>0$ and avoids the pole of $\Rk$ at $|\wgo|=1/3$.

\label{sec:bbn}
\begin{proposition}[Nucleosynthesis and the Ricci lock persist]
\label{prop:bbn}
During radiation domination the GO tracker counts as extra relativistic species,
\begin{equation}
\Delta N_\nu=\kappa_{\rm BBN}\,\frac{\Odr}{1-\Odr},\qquad \kappa_{\rm BBN}=\frac{43}{7}\ \ (T\simeq1~{\rm MeV}),
\label{eq:dnu}
\end{equation}
independently of curvature. BBN therefore requires $\Odr\simeq0$, i.e.\ $\alpha=2\gamma$, and on that surface
\begin{equation}
\Odm=\frac{1}{1+3|\wgo|}.
\label{eq:lock}
\end{equation}
\end{proposition}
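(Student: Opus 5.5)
I will add radiation to the closure relation and redo the radiation-era tracker of Eq.~\eqref{eq:Odr}. Then I will express the tracker as an effective energy density and convert it into $\Delta N_\nu$. Finally I will impose $\Odr\simeq0$ and substitute it into the parameter map~\eqref{eq:map}.

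\emph{Curvature-independence.} With radiation, $\Omega_r+\Omega_m+\Omega_d+\Omega_k=1$ and $q=1$ deep in radiation domination. By Eq.~\eqref{eq:uscale}, $u\propto a^4\to0$, so $g\to1$. Since $\Omega_k=\Ok e^{-2x}/E^2\propto a^2\to0$, the $\kappa\Omega_k$ term in Eq.~\eqref{eq:q} drops out whatever the value of $\kappa$. Equation~\eqref{eq:q} then gives $\Omega_d\to\alpha-2\gamma=\Odr$. This value is stationary under the radiation version of Eq.~\eqref{eq:odot}, because there $1-2q$ is replaced by the factor that vanishes at $q=1$ and the $\Omega_k$ term is negligible. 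Equivalently, one can keep the radiation term $\Omega_{r0}e^{-4x}$ on the right of the linear equation~\eqref{eq:linear} and repeat the particular-solution step of Proposition~\ref{prop:curved}. The $e^{-4x}$ coefficient is then $\Omega_{r0}/(1-\alpha+2\gamma)=\Omega_{r0}/(1-\Odr)$, and it is independent of $\Ok$ and $\kappa$ because the equation is linear and the curvature source sits in a separate $e^{-2x}$ term. This is the cleanest way to show that the result does not depend on curvature.

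\emph{Converting to $\Delta N_\nu$.} The dark energy then scales as $a^{-4}$ with $\rho_d=\Odr\,\rho_{\rm tot}$, so $\rho_d/\rho_r=\Odr/(1-\Odr)$. I will write $\rho_r\propto g_*=2+\tfrac78\cdot2\cdot3+\tfrac78\cdot4=10.75=43/4$ at $T\simeq1$ MeV (photons, three neutrinos, $e^\pm$), with one neutrino species contributing $\tfrac78\cdot2=7/4$. Then $\Delta N_\nu=(\rho_d/\rho_r)\,g_*/(7/4)=\tfrac{43}{7}\Odr/(1-\Odr)$, which is Eq.~\eqref{eq:dnu}. The step I expect to need the most care is checking that the curvature-like part of Eq.~\eqref{eq:rhodecomp}, which scales as $a^{-2}$, and the GO mode, which scales as $a^{k}$, are negligible at BBN. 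Both grow relative to the $a^{-4}$ radiation term toward late times. So they are subdominant at $z\sim10^9$ whenever $k>-4$, i.e.\ $\wgo<1/3$, which the prior $\wgo<-0.4$ guarantees.

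\emph{The lock.} Bounds of order $\Delta N_\nu\lesssim0.3$ force $\Odr\simeq0$, i.e.\ $\alpha=2\gamma$. Inserting this into Eq.~\eqref{eq:Odm} gives $\Odm=2\gamma-\tfrac32\gamma=\gamma/2$. Equation~\eqref{eq:map} gives $\gamma=2(1-\Odm)/(3|\wgo|)$, so $\Odm=(1-\Odm)/(3|\wgo|)$, which solves to $\Odm=1/(1+3|\wgo|)$. This is Eq.~\eqref{eq:lock}. The map~\eqref{eq:map} does not involve $\Ok$, so the lock holds unchanged in the curved universe.
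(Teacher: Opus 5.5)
Your proposal is correct and follows the paper's own argument: $\Omega_k\propto a^2\to0$ in the radiation era, adding $\Omega_{r0}e^{-4x}$ to Eq.~\eqref{eq:linear} gives the tracker $\Odr=\alpha-2\gamma$ and $\rho_d/\rho_r=\Odr/(1-\Odr)$, the factor $g_*/(7/4)=43/7$ converts this to $\Delta N_\nu$, and inserting $\alpha=2\gamma$ into Eq.~\eqref{eq:map} gives $\Odm=\gamma/2=1/(1+3|\wgo|)$. The paper leaves implicit your explicit coefficient $\Omega_{r0}/(1-\alpha+2\gamma)$ and your check that the $a^{-3}$, $a^{-2}$ and $a^{k}$ pieces are negligible at BBN for $k>-4$; using the first rather than the second relation of Eq.~\eqref{eq:map} is only a cosmetic difference.
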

\begin{proof}
Curvature falls as $a^{-2}$ while radiation falls as $a^{-4}$, so $\Omega_k\to0$ in the radiation era and the argument of the flat case applies. Adding radiation $\rho_r\propto a^{-4}$ to Eq.~\eqref{eq:linear} adds a particular solution $\propto e^{-4x}$ (the solution with matter, radiation and curvature is Eq.~(6) of Ref.~\cite{Karami:2009gof}); during radiation domination $\rho_d/(\rho_d+\rho_r)=\Odr$, i.e.\ $\rho_d/\rho_r=\Odr/(1-\Odr)$, and $u\propto a^4$ [Eq.~\eqref{eq:uscale}] is negligible. At $T\simeq1$~MeV, $\rho_r=(\pi^2/30)g_*T^4$ with $g_*=10.75$, and one neutrino species contributes $\tfrac78\times2=\tfrac74$~\cite{Kolb:1990vq,baumann2022cosmology}; an extra component counts as $\Delta N_\nu=(\rho_d/\rho_r)g_*/(7/4)$, which is Eq.~\eqref{eq:dnu}. Setting $\Odr=0$ gives $\alpha=2\gamma$; inserting in Eq.~\eqref{eq:map}, $2\gamma=1+\tfrac32\gamma(1+\wgo)$, so $\gamma=2/(1+3|\wgo|)$ and $\Odm=\alpha-\tfrac32\gamma=\gamma/2$.
\end{proof}
The BBN measurement $N_\nu=2.86\pm0.15$~\cite{Fields:2019pfx} (the erratum~\cite{Fields:2020err} gives $2.843\pm0.154$, a negligible change here; consistent with the BBN+CMB value~\cite{Yeh:2022heq}) allows $-0.08\lesssim\Odr\lesssim0.03$ at $2\sigma$. On the BBN surface with $\kappa=0$ the curvature ratio is Eq.~\eqref{eq:Rkbbn}: BBN thus makes $\Rk$ a function of $\wgo$ alone: $\Rk=2$ at $\wgo=-1$, $\Rk>2$ for $-1<\wgo<-1/3$ (a GO mode that does not accelerate strongly), and $\Rk\to1$ as $|\wgo|\to\infty$, so a phantom GO mode ($\wgo<-1$) has $1<\Rk<2$.

\begin{proposition}[The fold singularity survives curvature]
\label{prop:fold}
If a trajectory with $1+w_d>0$ reaches $u=u_*$ at $x=x_f$, then $H$ and $\dot H$ stay finite while $\ddot H\propto(t_f-t)^{-1/2}$.
\end{proposition}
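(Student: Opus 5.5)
The plan is to work directly with the autonomous system of Sect.~\ref{sec:formB} and show that the only singular factor near $u=u_*$ is $1/(2-\keff(u))$ in Eq.~\eqref{eq:udot}, while $\Omega_d$, $\Omega_k$, $q$ and $w_d$ stay regular. This follows the flat argument of Ref.~\cite{Pandey:2026flat}, with curvature entering only through $q$ and $w_d$.

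\emph{Regularity of $H$ and $\dot H$.} Equations~\eqref{eq:q}, \eqref{eq:wd} and~\eqref{eq:odot} are smooth in $(\Omega_d,u,\Omega_k)$ near $u_*$, because $g(u_*)>0$ and $\gamma\neq0$. The rate $\Omega_d'$ is bounded there, so $\Omega_d$ and $\Omega_k$ have finite limits at $x_f$. Then $q$ and $w_d$ have finite limits, and $E^2$ is finite by Eq.~\eqref{eq:Erec} provided $\Omega_d\neq1$. Since $\dot H/H^2=-(1+q)$, both $H$ and $\dot H$ remain finite. The density floor gives the same conclusion directly: $\rho_d\to3\Mp^2H_0^2b\,\varphi(u_*)$ is finite.

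\emph{Local behaviour of $u$.} Write $\epsilon=u_*-u$. Because $\keff$ is monotone with $\keff'(u_*)>0$, we have $2-\keff\simeq\keff'(u_*)\,\epsilon$. Equation~\eqref{eq:udot} then gives
\[
\epsilon\,\epsilon'\simeq-\frac{3u_*(1+w_d)}{\keff'(u_*)}\equiv-C,
\]
with $C>0$ precisely because $1+w_d>0$. This sign is what makes the trajectory reach the fold in finite $x$ rather than being repelled from it. It follows that $\epsilon^2\simeq2C(x_f-x)$. Since $H$ is finite and nonzero, $x_f-x\simeq H_f(t_f-t)$, and therefore $\epsilon\propto(t_f-t)^{1/2}$ and $\dot u\propto(t_f-t)^{-1/2}$.

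\emph{The divergence of $\ddot H$.} Differentiate $\dot H=-H^2(1+q)$ to get
\[
\ddot H=-2H\dot H(1+q)-H^2\dot q .
\]
By Eq.~\eqref{eq:q}, $\dot q=-\gamma^{-1}\left[\kappa\dot\Omega_k+\dot\Omega_d/g-\Omega_d g'\dot u/g^2\right]$. The terms in $\dot\Omega_k$ and $\dot\Omega_d$ are bounded by Eq.~\eqref{eq:odot}. The remaining term is $\Omega_dg'(u_*)\dot u/(\gamma g^2)$, which diverges like $(t_f-t)^{-1/2}$ because $g'(u_*)\neq0$ and $\Omega_d\neq0$. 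Hence $\ddot H\propto(t_f-t)^{-1/2}$.

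The main obstacle is making the limits consistent. We must show that $1+w_d$ tends to a strictly positive limit at $x_f$, so that $C$ does not vanish and does not mask the square-root behaviour, and that $\Omega_d$ stays away from $0$ and $1$. To handle this, I would take the limiting values of $q$ and $w_d$ from the regular subsystem and use the hypothesis $1+w_d>0$ evaluated at $x_f$. I would also note that curvature changes neither $\keff$ nor the coefficient $g'(u_*)$, so the flat analysis carries over unchanged.
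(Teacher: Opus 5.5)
Your proposal is correct and is essentially the paper's proof. You use the same linearization $2-\keff\simeq\keff'(u_*)(u_*-u)$ in Eq.~\eqref{eq:udot} to get $u_*-u\propto(x_f-x)^{1/2}$, and the same regular subsystem, Eqs.~\eqref{eq:q}, \eqref{eq:wd} and~\eqref{eq:odot}, to keep $\Omega_d$, $\Omega_k$, $q$, $H$ and $\dot H$ finite. You also obtain the same $\ddot H=H^3[2(1+q)^2-q']$, which diverges through the $\Omega_d\,g'(u)\,u'/g^2$ term of $q'$; the paper writes that term as the $(\keff-1)u'/u$ term of $\ell'$.

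Your explicit conditions ($1+w_d>0$ at $x_f$, $\Omega_d\neq0$) only spell out what the paper assumes. The proviso $\Omega_d\neq1$ is unnecessary if you use $E^2=\Omega_{m0}e^{-3x}/\Omega_m$ with $\Omega_m>0$, as in Sect.~\ref{sec:numerics}, since in a closed universe $\Omega_d$ may exceed 1.
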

\begin{proof}
Near the fold $2-\keff\simeq\keff'(u_*)(u_*-u)$ with $\keff'(u_*)>0$, so Eq.~\eqref{eq:udot} reads $u'\simeq\mathcal A/(u_*-u)$, $\mathcal A=3u_*(1+w_d)/\keff'(u_*)>0$, whence $(u_*-u)^2=2\mathcal A(x_f-x)$ and
\begin{equation}
u_*-u\propto(x_f-x)^{1/2}.
\label{eq:foldscaling}
\end{equation}
By Eqs.~\eqref{eq:q}, \eqref{eq:wd} and~\eqref{eq:odot}, $\Omega_d$, $\Omega_k$ and $q$ remain finite, and so do $H$ and $\dot H=-H^2(1+q)$. Differentiating Eq.~\eqref{eq:q}, $q'=-\gamma^{-1}\left[\ell'+\kappa\Omega_k'\right]$ with $\ell'=\ell[\Omega_d'/\Omega_d-(\keff-1)u'/u]\propto u'\propto(x_f-x)^{-1/2}$, since $\Omega_d'$ and $\Omega_k'$ are finite. Then ${\rm d}/{\rm d}t=H\,{\rm d}/{\rm d}x$ and $H'/H=-(1+q)$ give
\begin{equation}
\ddot H=H^3\left[2(1+q)^2-q'\right]\propto(t_f-t)^{-1/2},
\label{eq:hddot}
\end{equation}
using $x_f-x\simeq H_f(t_f-t)$.
\end{proof}
This is a generalized sudden singularity in the sense of Barrow~\cite{Barrow:2004xh,Barrow:2004hk}: energy density and pressure stay finite while a higher derivative of $H$ diverges~\cite{Nojiri:2005sx}. Dark energy that keeps diluting ($w_d>-1$) reaches the floor of the density, which it cannot cross.

\subsection{Background evolution}
\label{sec:evolution}
Figure~\ref{fig:params} shows $\Omega_d(z)$, $w_d(z)$ and $q(z)$ for $-1<z<3$, obtained by integrating Eqs.~\eqref{eq:q}--\eqref{eq:udot} from Eq.~\eqref{eq:ic} into the past and the future, varying one parameter at a time about the joint best fit of Sect.~\ref{sec:results}. Every curve has a matter era with $\Omega_d\to\Odm$ and $w_d\to0$, a transition from deceleration to acceleration, and dark-energy domination today. Figure~\ref{fig:params} shows four effects. Increasing $\Ok$ lowers $\Omega_d$ today, because $\Omega_d(0)=1-\Omega_{m0}-\Ok$ [Eq.~\eqref{eq:ic}], raises $w_d$ and $q$ today, and brings the fold earlier; in the past the curves converge, because curvature falls behind matter. A more negative $\wgo$ pushes $w_d$ towards and below $-1$ and delays the fold. $\Odm$ sets the dark-energy fraction in the past, as Eq.~\eqref{eq:Odm} requires, and has little effect on $q(z)$ at $z\lesssim1$. Increasing $u_0$ raises $w_d$ today above the value of GO-HDE ($u_0=0$), for which $w_d$ crosses $-1$ in the future, and moves the fold closer to the present.

\begin{figure}[H]
\centering\pipefig{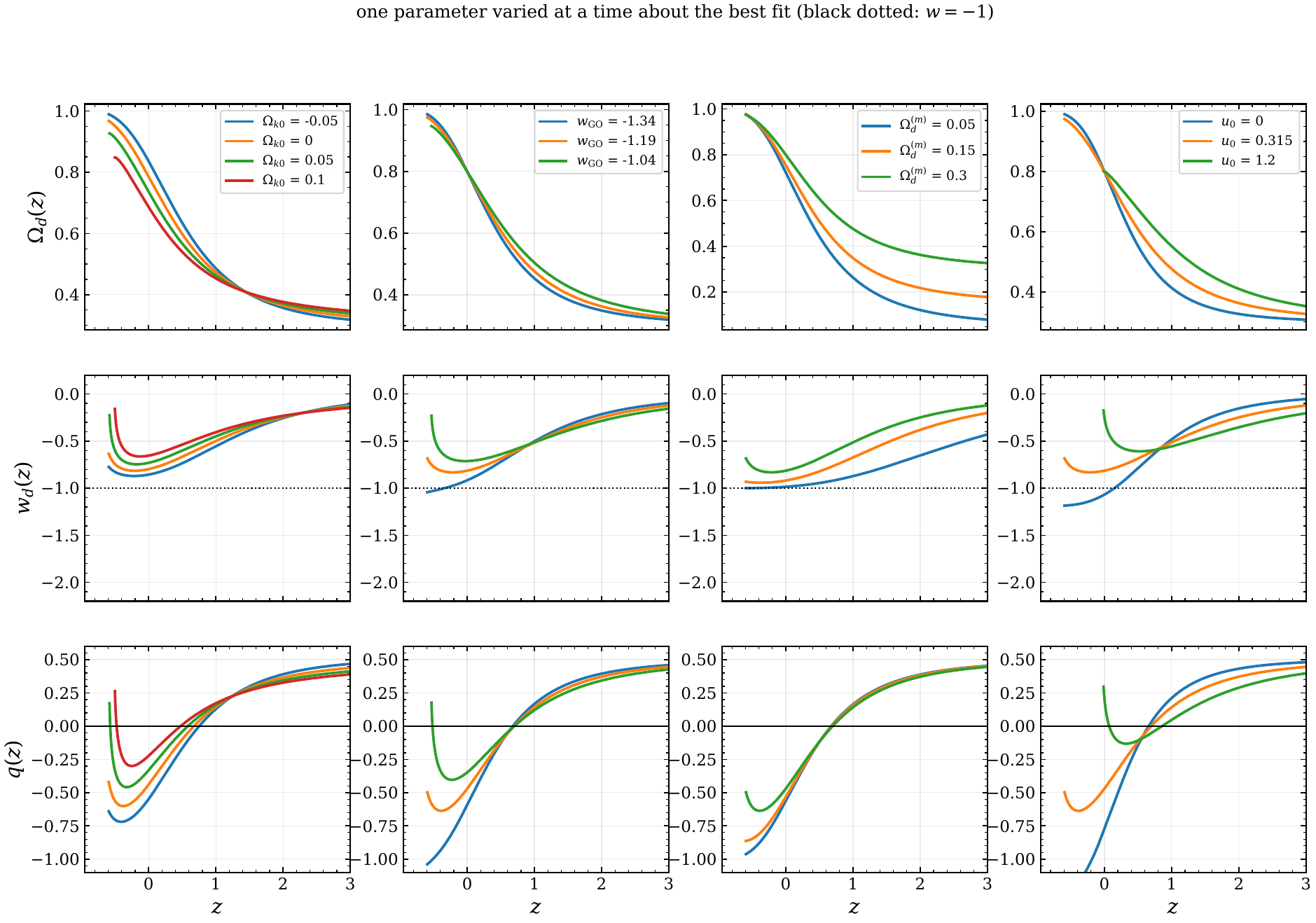}{\textwidth}
\caption{Evolution of $\Omega_d$ (top), $w_d$ (middle) and $q$ (bottom) for $-1<z<3$; $z<0$ is the future. Each column varies one parameter about the joint best fit of curved GO-NTHDE: $\Ok$, $\wgo$, $\Odm$, $u_0$ (legends). Dotted: $w=-1$.}
\label{fig:params}
\end{figure}

\section{Observational data and likelihoods}
\label{sec:data}

\subsection{Cosmological distances in a curved universe}
\label{sec:dist}
With $\chi(z)\equiv\int_0^z{\rm d}z'/E(z')$ the dimensionless line-of-sight comoving distance, the transverse comoving distance in the geometry~\eqref{eq:metric} is~\cite{Hogg:1999ad}
\begin{equation}
D_M(z)=\frac{c_0}{H_0}\times\begin{cases}\dfrac{\sinh\!\big(\sqrt{\Ok}\,\chi\big)}{\sqrt{\Ok}}&\Ok>0\ \text{(open)},\\[2mm] \chi&\Ok=0,\\[1mm] \dfrac{\sin\!\big(\sqrt{-\Ok}\,\chi\big)}{\sqrt{-\Ok}}&\Ok<0\ \text{(closed)},\end{cases}
\label{eq:dm}
\end{equation}
with $c_0=299\,792.458$~km\,s$^{-1}$ the speed of light. The luminosity and Hubble distances are
\begin{equation}
D_L=(1+z)D_M,\qquad D_H=\frac{c_0}{H(z)},
\label{eq:dists}
\end{equation}
and BAO analyses also use the volume-averaged distance~\cite{SDSS:2005xqv}
\begin{equation}
D_V(z)=\left[z\,D_M^2(z)\,D_H(z)\right]^{1/3}.
\label{eq:dv}
\end{equation}
The distance modulus is $\mu(z)=5\log_{10}[D_L(z)/{\rm Mpc}]+25$~\cite{Hogg:1999ad}. Here $\Ok$ is the \emph{geometric} curvature of Eq.~\eqref{eq:metric}, while $E(z)$ is the expansion rate of the model, which contains the \emph{dynamical} coefficient $\Okt=\Rk\Ok$ of Proposition~\ref{prop:curved}. For $w$CDM, CPL and $\Lambda$CDM with curvature the two coincide. The integral $\chi$ is evaluated with the trapezoidal rule on 3000 points.

\subsection{Type Ia supernovae}
\label{sec:sn}
The Pantheon+SH0ES sample~\cite{Scolnic:2021amr,Brout:2022vxf} gives, for each supernova $i$, a standardized distance modulus $\mu_i^{\rm obs}$ computed with the SH0ES absolute magnitude. We use the 1657 light curves with $z_{\rm HD}>0.01$ or a Cepheid host (77 calibrators), with the full statistical and systematic covariance $C_{\rm SN}$. The residuals are
\begin{equation}
\Delta_i=\begin{cases}\mu_i^{\rm obs}-\mu_i^{\rm Ceph} & \text{(Cepheid hosts)},\\ \mu_i^{\rm obs}-\mu(z_i) & \text{(all others)},\end{cases}
\label{eq:snres}
\end{equation}
where $\mu_i^{\rm Ceph}$ are the Cepheid distances of SH0ES~\cite{Riess:2021jrx} and $\mu(z)$ is evaluated at the Hubble-diagram redshift $z_{\rm HD}$ in both factors of Eq.~\eqref{eq:dists}. A shift of the absolute magnitude moves all $\mu_i^{\rm obs}$ by the same amount. Marginalizing the likelihood over this shift with a flat prior gives~\cite{Goliath:2001af,Nesseris:2005ur}
\begin{equation}
\chi^2_{\rm SN}=\Delta^{\!\top}C_{\rm SN}^{-1}\Delta-\frac{\left(\sum_{ij}(C_{\rm SN}^{-1})_{ij}\Delta_j\right)^2}{\sum_{ij}(C_{\rm SN}^{-1})_{ij}} .
\label{eq:chisn}
\end{equation}
To derive Eq.~\eqref{eq:chisn}, let all residuals shift by the same unknown amount $m$, $\Delta\to\Delta-m\mathbf 1$. Then
\begin{equation}
\chi^2(m)=\mathcal A-2m\mathcal B+m^2\mathcal C=\mathcal A-\frac{\mathcal B^2}{\mathcal C}+\mathcal C\left(m-\frac{\mathcal B}{\mathcal C}\right)^2,
\label{eq:margderiv}
\end{equation}
with $\mathcal A=\Delta^{\!\top}C_{\rm SN}^{-1}\Delta$, $\mathcal B=\mathbf 1^{\!\top}C_{\rm SN}^{-1}\Delta$ and $\mathcal C=\mathbf 1^{\!\top}C_{\rm SN}^{-1}\mathbf 1$. Integrating $e^{-\chi^2(m)/2}$ over $m$ gives $\sqrt{2\pi/\mathcal C}\,e^{-(\mathcal A-\mathcal B^2/\mathcal C)/2}$; the prefactor does not depend on the cosmological parameters, so the marginalized $\chi^2$ is $\mathcal A-\mathcal B^2/\mathcal C$. Minimizing $\chi^2(m)$ over $m$ gives the same expression.
Because the Cepheid distances do not depend on $H_0$, the calibrators keep the absolute distance scale and the supernovae constrain $H_0$. As alternatives we use Union3 (22 binned distance moduli with covariance)~\cite{Rubin:2023ovl}, DES-Dovekie (1820 supernovae)~\cite{Popovic:2025dovekie}, and PantheonPlus without SH0ES, i.e.\ the 1590 supernovae at $z>0.01$ with $\Delta_i=\mu_i^{\rm obs}-\mu(z_i)$ for all of them; Eq.~\eqref{eq:chisn} is applied in every case, so without calibrators $H_0$ is not constrained by supernovae.

\subsection{Cosmic chronometers}
Passively evolving galaxies measure the expansion rate from their differential ages~\cite{Jimenez:2001gg},
\begin{equation}
H(z)=-\frac{1}{1+z}\frac{{\rm d}z}{{\rm d}t}.
\label{eq:cc}
\end{equation}
We use 32 such measurements at $0.07\leq z\leq1.965$~\cite{Moresco:2022phi} with
\begin{equation}
\chi^2_{\rm CC}=\sum_{i=1}^{32}\frac{\left[H_i^{\rm obs}-H_0E(z_i)\right]^2}{\sigma_i^2}.
\label{eq:chicc}
\end{equation}
As a robustness check we replace the diagonal errors by the covariance $C_{\rm CC}={\rm diag}(\sigma_i^2)+f^2H^{\rm obs}(H^{\rm obs})^{\!\top}$ with $f=0.05$, a fully correlated $5\%$ systematic.

\subsection{Baryon acoustic oscillations}
DESI DR2~\cite{DESI:2025zgx} measures $D_M/r_d$ and $D_H/r_d$ at effective redshifts $z=0.510$, $0.706$, $0.934$, $1.321$, $1.484$ and $2.330$, and $D_V/r_d$ at $z=0.295$: 13 numbers with covariance $C_{\rm BAO}$. Here $r_d$ is the comoving sound horizon at the baryon-drag epoch $z_d$~\cite{Eisenstein:1997ik},
\begin{equation}
r_d=\int_{z_d}^{\infty}\frac{c_s(z)}{H(z)}\,{\rm d}z,
\label{eq:rd}
\end{equation}
with $c_s$ the sound speed of the photon--baryon fluid. Computing $r_d$ needs early-universe physics; we instead treat $r_d$ as a free parameter, so that BAO constrains ratios of distances and the product $H_0r_d$. With $v$ the vector of the 13 ratios, computed from Eqs.~\eqref{eq:dm}--\eqref{eq:dv},
\begin{equation}
\chi^2_{\rm BAO}=\left(v^{\rm obs}-v^{\rm th}\right)^{\!\top}C_{\rm BAO}^{-1}\left(v^{\rm obs}-v^{\rm th}\right).
\label{eq:chibao}
\end{equation}

\subsection{Nucleosynthesis}
When BBN is used, it enters as one data point,
\begin{equation}
\chi^2_{\rm BBN}=\frac{\left(3+\Delta N_\nu-N_\nu^{\rm obs}\right)^2}{\sigma_\nu^2},\qquad N_\nu^{\rm obs}=2.86,\ \sigma_\nu=0.15~\text{\cite{Fields:2019pfx}},
\label{eq:chibbn}
\end{equation}
with $\Delta N_\nu$ from Eq.~\eqref{eq:dnu}. Models without a radiation tracker have $\Delta N_\nu=0$.

\subsection{External curvature prior}
As a check that is \emph{not} part of the main result, we add one Gaussian term for the curvature measured by Planck 2018 with BAO in $\Lambda$CDM+$\Ok$~\cite{Planck:2018vyg},
\begin{equation}
\chi^2_{\Omega_k}=\frac{(\Ok-\bar\Omega_k)^2}{\sigma_k^2},\qquad \bar\Omega_k=0.0007,\ \sigma_k=0.0019 .
\label{eq:chiok}
\end{equation}
the 68\% result for TT,TE,EE+lowE+lensing+BAO. It is used only in the robustness run labelled ``Planck+BAO Gaussian prior on $\Ok$'' and enters the flat models with $\Ok=0$ exactly. Because the term is added to a likelihood whose prior box on $\Ok$ is $0.6$ wide, a curved model pays an Occam factor of about $\ln[0.6/(\sqrt{2\pi}\,\sigma_k)]\simeq4.8$ in $\ln Z$ relative to its flat counterpart; this is the price of a parameter that the external data fix, and it is the reason to read this run together with the main one rather than instead of it.

\subsection{Joint likelihood}
The data sets are independent, so
\begin{equation}
\chi^2_{\rm tot}=\chi^2_{\rm SN}+\chi^2_{\rm CC}+\chi^2_{\rm BAO}\,(+\chi^2_{\rm BBN})\,(+\chi^2_{\Omega_k}),\qquad \mathcal L(D|\theta)\propto e^{-\chi^2_{\rm tot}/2},
\label{eq:like}
\end{equation}
with $N=1657+32+13=1702$ data points in the main analysis.


\section{Statistical methodology}
\label{sec:stats}

For each parameter point the background equations of Sect.~\ref{sec:model} are integrated, the observables of Sect.~\ref{sec:data} are computed and compared with the data, and the resulting likelihood is explored with the methods described in this section.

\subsection{Bayesian inference}
For a model $M$ with parameters $\theta$, Bayes' theorem gives the posterior~\cite{Trotta:2008qt}
\begin{equation}
P(\theta|D,M)=\frac{\mathcal L(D|\theta,M)\,\pi(\theta|M)}{Z(D|M)},\qquad Z(D|M)=\int\mathcal L(D|\theta,M)\,\pi(\theta|M)\,{\rm d}^k\theta ,
\label{eq:bayes}
\end{equation}
where $\pi$ is the prior and the evidence $Z$ normalizes the posterior. The posterior gives parameter constraints; the evidence compares models through the Bayes factor
\begin{equation}
\ln B_{12}=\ln Z_1-\ln Z_2 .
\label{eq:bf}
\end{equation}
We use the Jeffreys scale as calibrated in Ref.~\cite{Trotta:2008qt} (see also Ref.~\cite{Kass:1995loi}): $|\ln B|<1$ inconclusive, $1$--$2.5$ weak, $2.5$--$5$ moderate, and $>5$ strong evidence.

\subsection{Parameters and priors}
\label{sec:priors}
The models and their parameter vectors are:
\begin{itemize}
\item flat $\Lambda$CDM, flat $w$CDM, flat GO-HDE and flat GO-NTHDE (the reference models of the flat paper~\cite{Pandey:2026flat}, re-run with the priors of this paper);
\item o$\Lambda$CDM: $(\Omega_m,\allowbreak \Ok,\allowbreak H_0,\allowbreak r_d)$, with $E^2=\Omega_ma^{-3}+\Ok a^{-2}+(1-\Omega_m-\Ok)$;
\item o$w$CDM: $(\Omega_m,\allowbreak \Ok,\allowbreak w,\allowbreak H_0,\allowbreak r_d)$; oCPL: $(\Omega_m,\allowbreak \Ok,\allowbreak w_0,\allowbreak w_a,\allowbreak H_0,\allowbreak r_d)$ with $w(a)=w_0+w_a(1-a)$~\cite{Chevallier:2000qy,Linder:2002et};
\item curved GO-HDE ($u_0=0$, $\kappa=0$): $(\Omt,\allowbreak \Ok,\allowbreak \wgo,\allowbreak \Odm,\allowbreak H_0,\allowbreak r_d)$;
\item curved GO-NTHDE (main model) and its truncation T2: $(\Omt,\allowbreak \Ok,\allowbreak \wgo,\allowbreak \Odm,\allowbreak u_0,\allowbreak H_0,\allowbreak r_d)$;
\item curved Ricci-NTHDE: $(\Omt,\allowbreak \Ok,\allowbreak \wgo,\allowbreak u_0,\allowbreak H_0,\allowbreak r_d)$, with $\kappa=\gamma$, $\alpha=2\gamma$ and $\Odm$ fixed by Eq.~\eqref{eq:lock}.
\end{itemize}
We use the physical parametrization of Eq.~\eqref{eq:map}. The flat priors are listed in Table~\ref{tab:priors}.

\begin{table}[H]
\centering\small
\caption{Flat priors. $u_*=1.5936$. The upper edge of $w$ and $\wgo$ is the same ($-0.4$) in all models, so the evidences of flat GO-HDE and flat $w$CDM must agree (Proposition~\ref{prop:curved} at $\Ok=0$).}
\label{tab:priors}
\begin{tabular}{ll}
\toprule
Parameter & Prior\\
\midrule
$\Omt$ (GO models), $\Omega_m$ & $[0.1,0.5]$\\
$\Ok$ & $[-0.3,0.3]$; $[-0.6,0.6]$ and $[-0.1,0.1]$ as checks\\
$\wgo$, $w$ & $[-2.5,-0.4]$\\
$w_0$, $w_a$ (oCPL) & $[-3,1]$, $[-3,2]$, with $w_0+w_a<0$\\
$\Odm$ & $[0,0.3]$; $[0,0.6]$ with BBN\\
$u_0$ & $[0,u_*)$; halved as a check\\
$H_0$ [km\,s$^{-1}$Mpc$^{-1}$], $r_d$ [Mpc] & $[60,85]$, $[120,180]$\\
\bottomrule
\end{tabular}
\end{table}

Points of the prior box $\Pi$ without a physical solution (for instance with a fold inside $z\leq2.5$) have $\mathcal L=0$, so Eq.~\eqref{eq:bayes} normalizes the evidence over $\Pi$. Normalizing over the allowed region $\mathcal A\subset\Pi$ instead gives
\begin{equation}
Z_{\mathcal A}=\frac{Z_\Pi}{f},\qquad f=\int_{\mathcal A}\pi\,{\rm d}^k\theta\simeq\frac{N_{\rm valid}}{N_{\rm draw}},
\label{eq:volume}
\end{equation}
where $f$ is estimated from $N_{\rm draw}=4000$ random prior draws. We quote both normalizations.

\subsection{Maximum likelihood and Markov chain Monte Carlo}
Best fits minimize $\chi^2_{\rm tot}$ with Powell's derivative-free method~\cite{Powell:1964} from several starting points. Posteriors are sampled with the affine-invariant ensemble sampler of Goodman and Weare~\cite{Goodman:2010} in the \texttt{emcee} implementation~\cite{Foreman-Mackey:2012any}. Each walker $X_k$ is moved along the line through another walker $X_j$,
\begin{equation}
Y=X_j+\mathcal Z\,(X_k-X_j),\qquad p(\mathcal Z)\propto\mathcal Z^{-1/2}\ \ \text{on}\ \ [1/a_s,a_s],\ a_s=2,
\label{eq:stretch}
\end{equation}
and the move is accepted with probability $\min\{1,\mathcal Z^{k-1}P(Y)/P(X_k)\}$, $k$ being the number of parameters~\cite{Goodman:2010}. We run 48 walkers for at least $2\times10^4$ steps and discard the first quarter as burn-in. Convergence requires, for every parameter, at least 50 integrated autocorrelation times,
\begin{equation}
\tau=1+2\sum_{t\geq1}\rho(t),
\label{eq:tau}
\end{equation}
with $\rho(t)$ the normalized autocorrelation function at lag $t$~\cite{Sokal:1997,Goodman:2010}, and a Gelman--Rubin statistic $\hat R<1.01$~\cite{Gelman:1992zz}. For $\hat R$ we split the walkers into two groups treated as chains of length $n$, with mean within-chain variance $W$ and between-chain variance $B$:
\begin{equation}
\hat R=\sqrt{\frac{\tfrac{n-1}{n}W+\tfrac1nB}{W}} .
\label{eq:rhat}
\end{equation}

\subsection{Nested sampling and the evidence}
Nested sampling rewrites the evidence of Eq.~\eqref{eq:bayes} as a one-dimensional integral~\cite{Skilling:2006gxv},
\begin{equation}
Z=\int_0^1\mathcal L(X)\,{\rm d}X,\qquad X(\lambda)=\int_{\mathcal L(\theta)>\lambda}\pi(\theta)\,{\rm d}^k\theta ,
\label{eq:ns}
\end{equation}
where $X$ is the prior mass with likelihood above $\lambda$. Starting from $n_{\rm live}$ prior points, the point of lowest likelihood $\mathcal L_i$ is replaced at each step by a new prior point with higher likelihood; the enclosed prior mass then shrinks on average as $X_i\simeq e^{-i/n_{\rm live}}$, and
\begin{equation}
Z\simeq\sum_i\mathcal L_i\,(X_{i-1}-X_i).
\label{eq:nssum}
\end{equation}
We use \texttt{dynesty}~\cite{Speagle:2019ivv} with $n_{\rm live}=500$, multi-ellipsoid bounds~\cite{Feroz:2008xx} and random-walk proposals, and stop when the estimated remaining evidence satisfies $\Delta\ln Z<0.1$. The statistical error of $\ln Z$ is about $0.2$.

\subsection{Information criteria}
We also quote~\cite{Akaike:1974,Schwarz:1978,Spiegelhalter:2002,Gelman:2004BDA}
\begin{equation}
\begin{gathered}
{\rm AIC}=\chi^2_{\min}+2k,\qquad {\rm BIC}=\chi^2_{\min}+k\ln N,\\
{\rm DIC}=\bar D+p_V,\qquad p_V=\tfrac12{\rm Var}[D],
\end{gathered}
\label{eq:ic_crit}
\end{equation}
where $D(\theta)=\chi^2_{\rm tot}(\theta)$ and $\bar D$, ${\rm Var}[D]$ are its posterior mean and variance. Differences are taken relative to $\Lambda$CDM; negative values favour the model. AIC and BIC charge every parameter in full, whereas $p_V$ counts only parameters the data constrain~\cite{Gelman:2004BDA}, so blind directions such as $\Odm$ are not penalized by DIC.

\subsection{Consistency of the data sets}
Before combining the supernovae with chronometers and BAO, we compute the tension statistic~\cite{Raveri:2018wln}
\begin{equation}
Q_{\rm DMAP}=\chi^2_{\min}({\rm SN+CC+BAO})-\chi^2_{\min}({\rm SN})-\chi^2_{\min}({\rm CC+BAO}),
\label{eq:qdmap}
\end{equation}
which follows a $\chi^2$ distribution with $N_{\rm SN}+N_{\rm CC+BAO}-N_{\rm joint}$ degrees of freedom when the data sets agree. Here
\begin{equation}
N_{\rm eff}=k-{\rm tr}\!\left(C_\Pi^{-1}C_{\rm post}\right)
\label{eq:neff}
\end{equation}
is the number of parameters constrained beyond the prior ($C_\Pi$: prior covariance, $({\rm hi}-{\rm lo})^2/12$ for a flat prior; $C_{\rm post}$: posterior covariance)~\cite{Raveri:2018wln}. The probability $p$ of exceeding $Q_{\rm DMAP}$ is converted into an equivalent Gaussian significance $n_\sigma=\Phi^{-1}(1-p/2)$, with $\Phi$ the standard normal distribution function.

\subsection{Derived quantities}
From each posterior sample we compute $q(z)$ and $w_d(z)$ from Eqs.~\eqref{eq:q} and \eqref{eq:wd}; the curvature consistency ratio $\Rk$ and the dynamical curvature $\Okt=\Rk\Ok$ from Eq.~\eqref{eq:Rkphys}; the tracker fractions $\Odk=\alpha-\gamma$ and $\Odr=\alpha-2\gamma$; the probability of an open universe $P(\Ok>0)$; the transition redshift $z_t$ from $q(z_t)=0$; and the CPL-equivalent parameters~\cite{Chevallier:2000qy,Linder:2002et}
\begin{equation}
w_0=w_d(z=0),\qquad w_a=-\left.\frac{{\rm d}w_d}{{\rm d}a}\right|_{a=1}=-w_d'(x=0),
\label{eq:w0wa}
\end{equation}
where $w_d'$ is evaluated analytically from the state $(\Omega_d,u)$ today. Differentiating Eq.~\eqref{eq:wd}, and Eq.~\eqref{eq:q} with $\ell=\Omega_d/g(u)$,
\begin{equation}
w_d'=\frac{2q'+\Omega_k'}{3\Omega_d}-\frac{w_d\,\Omega_d'}{\Omega_d},\qquad q'=-\frac{\ell'+\kappa\,\Omega_k'}{\gamma},\qquad \ell'=\ell\left[\frac{\Omega_d'}{\Omega_d}-(\keff-1)\frac{u'}{u}\right],
\label{eq:wprime}
\end{equation}
with $\Omega_d'$, $\Omega_k'$ and $u'/u$ from Eqs.~\eqref{eq:odot} and \eqref{eq:udot}; this follows by differentiating the second form of Eq.~\eqref{eq:wd}. From Eq.~\eqref{eq:absorb} with $c=1$, $\delta=bH_0^2/(8\pi^2\Mp^2)$. For trajectories that reach the fold we record $a_f=e^{x_f}$ and the cosmic time
\begin{equation}
t_f-t_0=\int_0^{x_f}\frac{{\rm d}x}{H(x)},\qquad H=H_0\sqrt{\frac{\Omega_{m0}e^{-3x}+\Ok e^{-2x}}{1-\Omega_d}} .
\label{eq:tfold}
\end{equation}

\subsection{Numerical implementation and validation}
\label{sec:numerics}
Equations~\eqref{eq:q}, \eqref{eq:wd}, \eqref{eq:odot} and~\eqref{eq:udot} are integrated with the explicit Runge--Kutta method DOP853~\cite{Dormand:1980,Hairer:1993} of SciPy~\cite{Virtanen:2019joe}, with relative and absolute tolerances $10^{-7}$ and $10^{-9}$, on 300 points in $0\leq z\leq2.5$. The integration stops, and the point is rejected, if $2-\keff<10^{-4}$ inside the data range (a fold), or if $\Omega_m\to0$ or $\Omega_d\to0$ (for curved GO-HDE, whose solution is the closed form~\eqref{eq:Etilde}, the point is rejected if $\Omega_d\le0$ anywhere in $0\le z\le2.5$; this is the condition $L^2>0$ discussed after Eq.~\eqref{eq:rhodecomp}). Points with $|1-\alpha+\gamma|<10^{-3}$ are rejected. The expansion rate is computed as $E^2=\Omega_{m0}e^{-3x}/\Omega_m$ with $\Omega_m=1-\Omega_d-\Omega_k>0$, which is equivalent to Eq.~\eqref{eq:Erec} but never of the form $0/0$ (in a closed universe $\Omega_d$ may exceed 1). The code passes \NTests{} groups of validation checks (Table~\ref{tab:tests}), after the checks of the flat code~\cite{Pandey:2026flat}: among them, the flat limit (the curved code at $\Ok=0$ equals the flat code of Ref.~\cite{Pandey:2026flat}), the closed form of Proposition~\ref{prop:curved} against numerics for open and closed universes and for both cutoffs, closure of Eq.~\eqref{eq:closure}, the residual of the density relation~\eqref{eq:formA} along the numerical solution, energy conservation, the trackers of Eqs.~\eqref{eq:Odm} and~\eqref{eq:Odk}, Proposition~\ref{prop:first}, Eq.~\eqref{eq:clarkson}, and the fold. A second script, distributed with the code, re-derives every new equation of this paper symbolically with SymPy~\cite{Meurer:2017yhf} and compares each implemented formula with the code that evaluates it; \NEqOK{} of \NEqChecks{} checks pass. To test whether $u_0$ can be recovered, and whether curvature is recovered, we build a noise-free (Asimov~\cite{Cowan:2010js}) mock: every data vector is replaced by the model prediction at a fiducial point with $u_0=\MockUzeroTruth$ and $\Ok=\MockOkTruth$, keeping the real covariances. Its posterior recovers the input values (Sect.~\ref{sec:tension}, Fig.~\ref{fig:mock}).

\begin{table}[H]
\centering\scriptsize
\caption{Validation checks of the code. The result column is copied verbatim from the log of the run that produced this paper.}
\label{tab:tests}
\begin{tabular}{p{0.36\textwidth}cp{0.44\textwidth}}
\toprule
Check & Verdict & Result (from the run log)\\
\midrule
Flat limit: curved code at $\Omega_{k0}=0$ equals the flat code (E(z), $\chi^2$) & pass & \texttt{\detokenize{max |dE|/E = 0.00e+00}}\\
Closed form of curved GO-HDE, Eq.~\eqref{eq:Etilde}, against numerics (GO and Ricci; open and closed) & pass & \texttt{\detokenize{max |dE|/E = 2.56e-10 (6 GO cases, 2 Ricci cases)}}\\
Closure $\Omega_m+\Omega_d+\Omega_k=1$ and $\Omega_k$ from the state against $\Omega_{k0}e^{-2x}/E^2$ & pass & \texttt{\detokenize{max |sum - 1| = 2.2e-16, max |Omega_k state - algebraic| = 1.0e-10}}\\
Constraint residual of the density relation, Eq.~\eqref{eq:formA}, along the numerical solution & pass & \texttt{\detokenize{max |F| = 5.05e-06}}\\
Energy conservation, Eq.~\eqref{eq:wd}, and the equation for $u$, Eq.~\eqref{eq:udot} & pass & \texttt{\detokenize{max |w_num - w_state| = 6.5e-07, max |dln u/dx - 3(1+w)/(2-k)| = 1.5e-05}}\\
Matter-era and curvature-era trackers, Eqs.~\eqref{eq:Odm} and \eqref{eq:Odk} & pass & \texttt{\detokenize{Omega_d -> 0.1000 (alpha - 3 gamma/2 = 0.1000); dln u/dx = 3.000 (3)}}\\
First order in $b$, Eq.~\eqref{eq:firstorder} & pass & \texttt{\detokenize{error 5.08e-04 -> 1.30e-04 when b is halved (expect /4)}}\\
Curvature ratio, Eqs.~\eqref{eq:Etilde} and \eqref{eq:Rkphys}; Ricci cutoff gives $R_k=1$ and equals o$w$CDM & pass & \texttt{\detokenize{R_k = 1.582150 = 3|w|/[(1-Odm)(3|w|-1)]}}\\
Distances (sinh/sin) against quadrature; the Clarkson test is constant & pass & \texttt{\detokenize{max rel. diff = 7.2e-08}}\\
Fold: rejection, location and square-root law & pass & \texttt{\detokenize{ModelError}}\\
Likelihood continuity ($\Omega_{k0}\to0$, $u_0\to0$) and integration tolerance & pass & \texttt{\detokenize{1486.48882 vs 1486.48882}}\\
Parameter maps and the BBN map are unchanged by curvature & pass & \texttt{\detokenize{Odm = alpha - 3 gamma/2, w_GO = -1 - 2(1-alpha)/(3 gamma)}}\\
Pole of the curvature tracker rejected; $w_{\rm GO}$ prior edge & pass & \texttt{\detokenize{|1 - alpha + gamma| < 1e-3 raises ModelError}}\\
BAO distances $D_M$, $D_H$, $D_V$ in curved space against the supernova routine and the definitions & pass & \texttt{\detokenize{max rel. diff D_M 8.8e-09, D_H 0.0e+00, D_V 5.9e-09 (BAO routine vs SN routine and definitions)}}\\
Analytic ${\rm d}w_d/{\rm d}x$, Eq.~\eqref{eq:wprime}, against finite differences & pass & \texttt{\detokenize{analytic 0.39496334, numerical 0.39496337}}\\
Time to the fold, Eq.~\eqref{eq:tfold}, against an independent quadrature & pass & \texttt{\detokenize{t_f - t_0 = 0.34706 Gyr (detector) vs 0.34706 Gyr (trapezoid)}}\\
Nesting $\Lambda$CDM $\subset$ o$\Lambda$CDM $\subset$ o$w$CDM $\subset$ oCPL; Ricci-NTHDE with $u_0\to0$ equals o$w$CDM & pass & \texttt{\detokenize{1540.0786=1540.0786; 1506.0898=1506.0898; 1498.3467=1498.3467}}\\
\bottomrule
\end{tabular}

\end{table}

\section{Results}
\label{sec:results}

\subsection{Constraints and model comparison}
\label{sec:comparison}
Table~\ref{tab:comparison} gives $\chi^2_{\min}$, the information criteria and the Bayes factors of every model against flat $\Lambda$CDM, with both evidence normalizations of Eq.~\eqref{eq:volume}; Table~\ref{tab:posteriors} gives the posterior medians with 68\% intervals of the sampled parameters. Curved GO-NTHDE reaches $\chi^2_{\min}=\ChiGONTHDEK$, i.e.\ $\Delta\chi^2=\DChiGONTHDEK$ relative to flat $\Lambda$CDM, with $\Delta{\rm AIC}=\DAICGONTHDEK$, $\Delta{\rm BIC}=\DBICGONTHDEK$ and $\Delta{\rm DIC}=\DDICGONTHDEK$. Its Bayes factor against flat $\Lambda$CDM is $\ln B=\LnBGONTHDEK$ ($\LnBAGONTHDEK$ over the allowed region). For comparison, flat GO-NTHDE has $\ln B=\LnBGONTHDE$, o$\Lambda$CDM has $\ln B=\LnBOLCDM$, and o$w$CDM $\ln B=\LnBOWCDM$. Three further Bayes factors follow from the same evidences: allowing curvature in GO-NTHDE changes the evidence by $\ln B=\LnBCurvVsFlat$ (curved against flat; $\LnBCurvVsFlatHDE$ for GO-HDE), curved GO-NTHDE against o$\Lambda$CDM gives $\ln B=\LnBvsOLCDMGONTHDEK$, and the Tsallis deformation in the curved model is worth $\ln B=\LnBNTHDEvsHDEK$ (curved GO-NTHDE against curved GO-HDE).

No model is preferred over flat $\Lambda$CDM. Every model with evolving dark energy improves the fit by a similar amount, with $\Delta\chi^2$ from $\DChiWCDM$ (flat $w$CDM) to $\DChiTTWOK$ (curved GO-NTHDE, T2), so the comparison is decided by the number of parameters and the prior volume. The least disfavoured extension is flat $w$CDM by AIC ($\Delta{\rm AIC}=\DAICWCDM$) and flat GO-NTHDE by the evidence ($\ln B=\LnBGONTHDE$). On the Jeffreys scale the evidence against curved GO-NTHDE lies at the boundary between weak and moderate, against flat GO-NTHDE and o$\Lambda$CDM it is weak, and against oCPL strong. Curvature is not rewarded in any model: it lowers the evidence of $\Lambda$CDM by $\LnBOLCDM$ and that of GO-NTHDE by $\LnBCurvVsFlat$, close to the Occam factor of a parameter that the data measure to about $\pm0.04$ inside a prior box $0.6$ wide. The Tsallis deformation, in contrast, is mildly useful in both geometries, as in the flat analysis~\cite{Pandey:2026flat}. The twin check, flat GO-HDE against flat $w$CDM, which share their expansion rate exactly (Proposition~\ref{prop:curved} at $\Ok=0$), gives $\ln B=\LnBTwin$ instead of the exact value zero; we take this as the scale of the sampling error of the evidences, so that differences in $\ln B$ of a few tenths are not significant.

\begin{table}[H]
\centering\scriptsize
\caption{Model comparison on Pantheon+SH0ES + 32 chronometers + DESI DR2 BAO ($N=1702$), relative to flat $\Lambda$CDM. Negative $\Delta\chi^2$, $\Delta$AIC, $\Delta$BIC, $\Delta$DIC and positive $\ln B$ favour the model. $\ln B$ (box) is normalized over the prior box, $\ln B$ (allowed) over the region with a physical solution, Eq.~\eqref{eq:volume}.}
\label{tab:comparison}
\resizebox{\textwidth}{!}{\begin{tabular}{lcccccccc}
\toprule
Model & $k$ & $\chi^2_{\min}$ & $\Delta\chi^2$ & $\Delta$AIC & $\Delta$BIC & $\Delta$DIC & $\ln B$ (box) & $\ln B$ (allowed)\\
\midrule
flat $\Lambda$CDM & \KLCDM & $\ChiLCDM$ & $\DChiLCDM$ & $\DAICLCDM$ & $\DBICLCDM$ & $\DDICLCDM$ & $\LnBLCDM$ & $\LnBALCDM$ \\
flat $w$CDM & \KWCDM & $\ChiWCDM$ & $\DChiWCDM$ & $\DAICWCDM$ & $\DBICWCDM$ & $\DDICWCDM$ & $\LnBWCDM$ & $\LnBAWCDM$ \\
flat GO-HDE & \KGOHDE & $\ChiGOHDE$ & $\DChiGOHDE$ & $\DAICGOHDE$ & $\DBICGOHDE$ & $\DDICGOHDE$ & $\LnBGOHDE$ & $\LnBAGOHDE$ \\
flat GO-NTHDE & \KGONTHDE & $\ChiGONTHDE$ & $\DChiGONTHDE$ & $\DAICGONTHDE$ & $\DBICGONTHDE$ & $\DDICGONTHDE$ & $\LnBGONTHDE$ & $\LnBAGONTHDE$ \\
o$\Lambda$CDM & \KOLCDM & $\ChiOLCDM$ & $\DChiOLCDM$ & $\DAICOLCDM$ & $\DBICOLCDM$ & $\DDICOLCDM$ & $\LnBOLCDM$ & $\LnBAOLCDM$ \\
o$w$CDM & \KOWCDM & $\ChiOWCDM$ & $\DChiOWCDM$ & $\DAICOWCDM$ & $\DBICOWCDM$ & $\DDICOWCDM$ & $\LnBOWCDM$ & $\LnBAOWCDM$ \\
oCPL & \KOCPL & $\ChiOCPL$ & $\DChiOCPL$ & $\DAICOCPL$ & $\DBICOCPL$ & $\DDICOCPL$ & $\LnBOCPL$ & $\LnBAOCPL$ \\
curved GO-HDE & \KGOHDEK & $\ChiGOHDEK$ & $\DChiGOHDEK$ & $\DAICGOHDEK$ & $\DBICGOHDEK$ & $\DDICGOHDEK$ & $\LnBGOHDEK$ & $\LnBAGOHDEK$ \\
curved GO-NTHDE & \KGONTHDEK & $\ChiGONTHDEK$ & $\DChiGONTHDEK$ & $\DAICGONTHDEK$ & $\DBICGONTHDEK$ & $\DDICGONTHDEK$ & $\LnBGONTHDEK$ & $\LnBAGONTHDEK$ \\
curved GO-NTHDE (T2) & \KTTWOK & $\ChiTTWOK$ & $\DChiTTWOK$ & $\DAICTTWOK$ & $\DBICTTWOK$ & $\DDICTTWOK$ & $\LnBTTWOK$ & $\LnBATTWOK$ \\
curved Ricci-NTHDE & \KRICCIK & $\ChiRICCIK$ & $\DChiRICCIK$ & $\DAICRICCIK$ & $\DBICRICCIK$ & $\DDICRICCIK$ & $\LnBRICCIK$ & $\LnBARICCIK$ \\
\bottomrule
\end{tabular}
}
\end{table}

\begin{table}[H]
\centering\small
\caption{Posterior medians and 68\% intervals; $u_0$: 95\% upper limit. $\Omt$ is listed for the GO models and $\Omega_m$ for the others; $w$ for o$w$CDM, $w_0$ for oCPL, $\wgo$ for the GO models.}
\label{tab:posteriors}
\resizebox{\textwidth}{!}{\begin{tabular}{lcccccc}
\toprule
Model & $\tilde\Omega_m$ ($\Omega_m$) & $\Omega_{k0}$ & $w_{\rm GO}$ ($w$) & $u_0$ (95\%) & $H_0$ & $r_d$ [Mpc]\\
\midrule
flat $\Lambda$CDM & $\PostOmLCDM$ & $0$ & $-1$ & $--$ & $\PostHzeroLCDM$ & $\PostRdLCDM$\\
o$\Lambda$CDM & $\PostOmOLCDM$ & $\PostOkOLCDM$ & $-1$ & $--$ & $\PostHzeroOLCDM$ & $\PostRdOLCDM$\\
o$w$CDM & $\PostOmOWCDM$ & $\PostOkOWCDM$ & $\PostWOWCDM$ & $--$ & $\PostHzeroOWCDM$ & $\PostRdOWCDM$\\
oCPL & $\PostOmOCPL$ & $\PostOkOCPL$ & $\PostWzeroOCPL$ & $--$ & $\PostHzeroOCPL$ & $\PostRdOCPL$\\
curved GO-HDE & $\PostOmtGOHDEK$ & $\PostOkGOHDEK$ & $\PostWgoGOHDEK$ & $--$ & $\PostHzeroGOHDEK$ & $\PostRdGOHDEK$\\
curved GO-NTHDE & $\PostOmtGONTHDEK$ & $\PostOkGONTHDEK$ & $\PostWgoGONTHDEK$ & $<\UzeroULGONTHDEK$ & $\PostHzeroGONTHDEK$ & $\PostRdGONTHDEK$\\
curved GO-NTHDE (T2) & $\PostOmtTTWOK$ & $\PostOkTTWOK$ & $\PostWgoTTWOK$ & $<\UzeroULTTWOK$ & $\PostHzeroTTWOK$ & $\PostRdTTWOK$\\
curved Ricci-NTHDE & $\PostOmtRICCIK$ & $\PostOkRICCIK$ & $\PostWgoRICCIK$ & $<\UzeroULRICCIK$ & $\PostHzeroRICCIK$ & $\PostRdRICCIK$\\
\bottomrule
\end{tabular}
}
\end{table}

\subsection{What the data say about curvature}
\label{sec:curvature}
Figure~\ref{fig:okmodels} compares the posterior of $\Ok$ in all curved models, and Fig.~\ref{fig:corner} shows the joint posterior of all parameters of curved GO-NTHDE. Curved GO-NTHDE gives $\Ok=\PostOkGONTHDEK$ (95\% interval $[\OkLoGONTHDEK,\OkHiGONTHDEK]$), with probability $P(\Ok>0)=\POpenGONTHDEK$ of an open universe. The o$\Lambda$CDM value is $\PostOkOLCDM$, o$w$CDM gives $\PostOkOWCDM$ and oCPL $\PostOkOCPL$; the value reported for $w$CDM with curvature from DESI, chronometers and Pantheon+ is $\JHEAPOk$~\cite{Yadav:2026jheap}. Curvature is consistent with zero in every model. The models with evolving dark energy agree closely with one another, with an uncertainty of about $0.04$ around zero, while o$\Lambda$CDM leans to an open universe, $P(\Ok>0)=\POpenOLCDM$. In o$\Lambda$CDM curvature is the only freedom that can imitate the mild evolution of the dark energy that these data prefer; once $w$ is free the preference moves to $w$ ($w=\PostWOWCDM$ in o$w$CDM) and the curvature returns to zero. In curved GO-NTHDE this degeneracy appears as a correlation of $\Ok$ with $\wgo$ (correlation coefficient $\CorrOkWGONTHDEK$), whereas $\Ok$ and $H_0$ are uncorrelated ($\CorrOkHGONTHDEK$) because the SH0ES-calibrated supernovae fix $H_0$. Our constraint is about twice as broad as that of Ref.~\cite{Yadav:2026jheap}, which computes $r_d$ from a BBN prior on the baryon density while we leave $r_d$ free; the two are consistent. The geometric curvature of the GO models is not displaced relative to o$w$CDM, because all of them are consistent with zero; Sect.~\ref{sec:dynamic} discusses what this implies for $\Rk$.

\begin{figure}[H]
\centering\pipefig{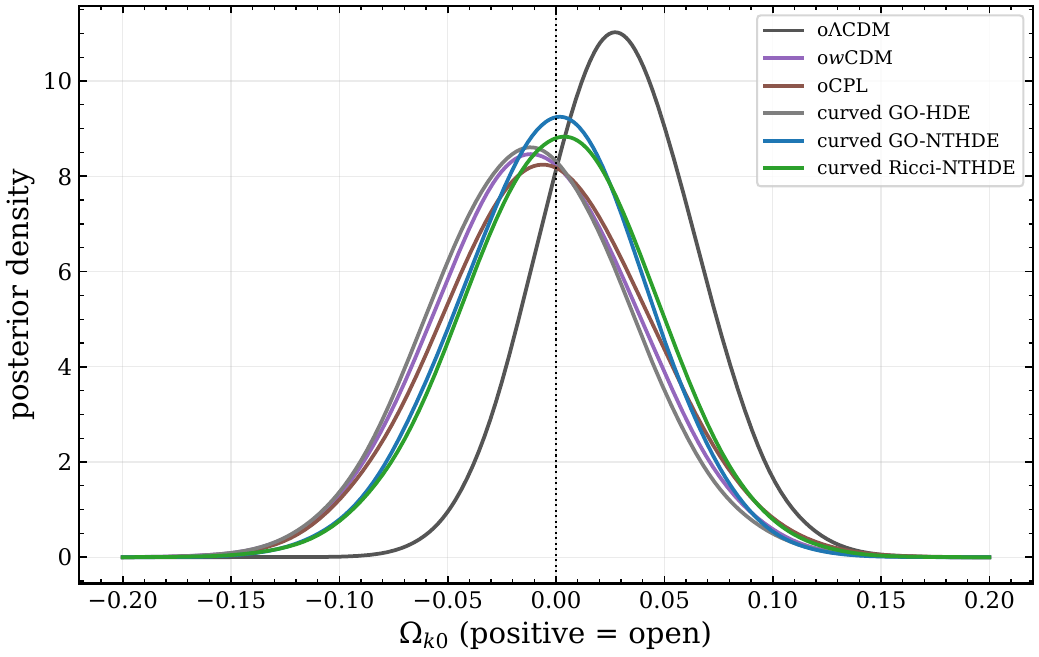}{0.78\textwidth}
\caption{Posterior density of $\Ok$ (positive: open) in the curved models. Dotted: $\Ok=0$.}
\label{fig:okmodels}
\end{figure}

\begin{figure}[H]
\centering\pipefig{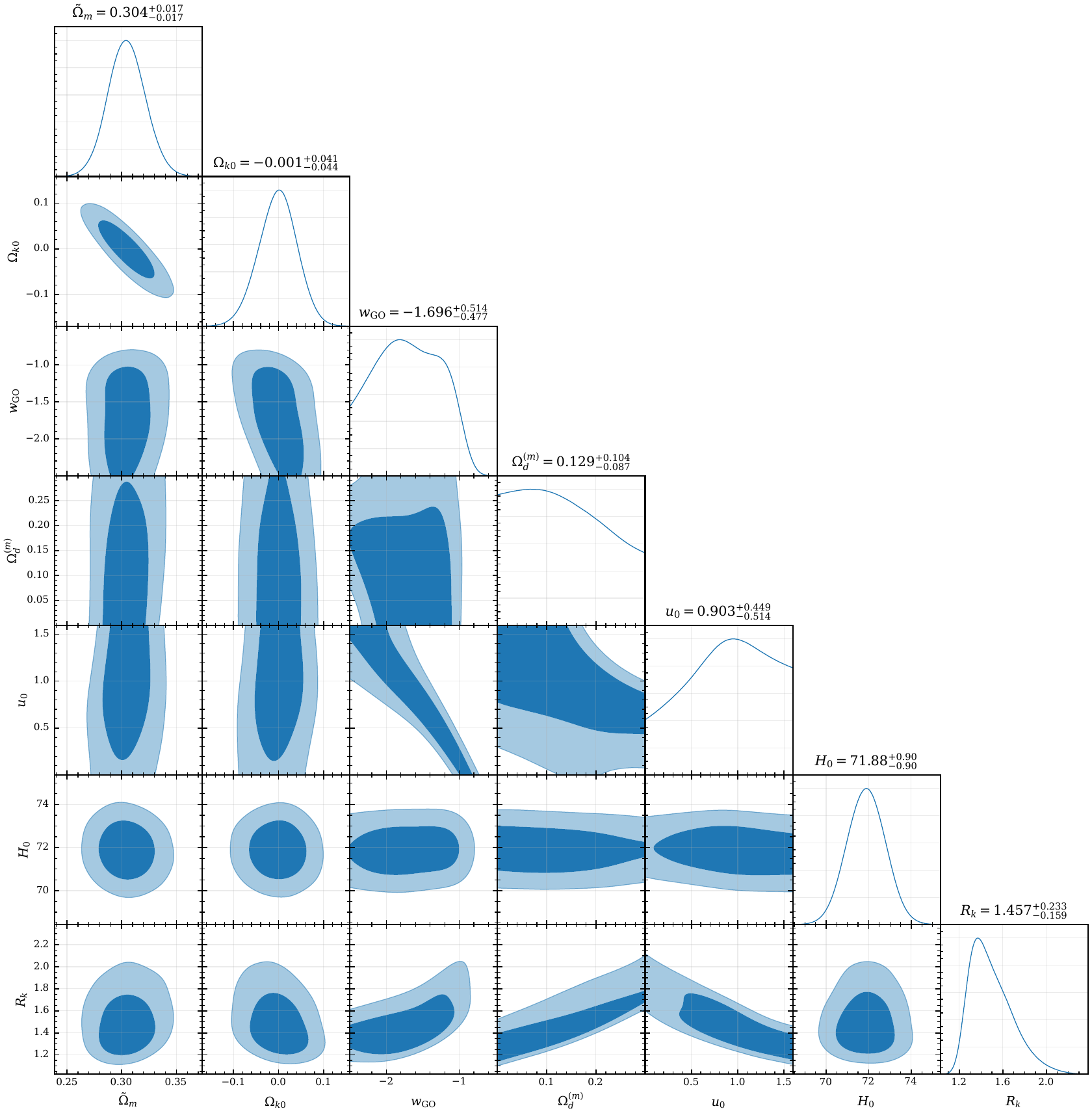}{0.95\textwidth}
\caption{Joint posterior (68\% and 95\%) of curved GO-NTHDE for the sampled parameters and the derived curvature ratio $\Rk$. Titles: medians with 68\% intervals.}
\label{fig:corner}
\end{figure}

\subsection{Geometric versus dynamical curvature}
\label{sec:dynamic}
Proposition~\ref{prop:ratio} predicts that the coefficient of $(1+z)^2$ in $H^2/H_0^2$ is larger than the geometric curvature. In the posterior, $\Rk=\DerRkGONTHDEK$ and $\Okt=\DerOktGONTHDEK$ (Fig.~\ref{fig:dyn}); the matter-era and curvature-era tracker fractions are $\Odm=\PostOdmGONTHDEK$ and $\Odk=\DerOdkGONTHDEK$. For comparison, a curved model in which the two coefficients are tied ($\Rk=1$, o$w$CDM) gives the likelihood in Table~\ref{tab:comparison}; the Ricci-completed GO cutoff, which enforces $\Rk=1$, has $\Delta\chi^2=\DChiRICCIK$ and $\ln B=\LnBRICCIK$, so that $\ln B({\rm GO}:{\rm Ricci})=\LnBGOvsRicci$; curved GO-HDE against o$w$CDM, which differ only in $\Rk$ and in the blind-in-flat-space fraction $\Odm$, gives $\ln B=\LnBGOHDEvsOWCDM$. Both differences are compatible with zero within the sampling error set by the twin check of Sect.~\ref{sec:comparison}: the data cannot yet distinguish $\Rk>1$ from $\Rk=1$. The reason is visible in Fig.~\ref{fig:dyn}: with $\Ok$ this close to zero, the difference $(\Rk-1)\Ok$ between the dynamical and the geometric curvature, Eq.~\eqref{eq:rhodecomp}, is smaller than the uncertainty on $\Okt$ itself. The posterior of $\Rk$ therefore reflects the priors on $\wgo$ and $\Odm$ rather than the data: widening the $\Odm$ box from $[0,0.3]$ to $[0,0.6]$ moves it from $\DerRkGONTHDEK$ to $\RobRkODMWIDE$ (Sect.~\ref{sec:robust}). A measurement of $\Rk$ needs either a curvature clearly different from zero or an independent determination of the geometric curvature, for instance from the Clarkson test, Eq.~\eqref{eq:clarkson}, with future BAO and supernova data, to compare with the coefficient of $(1+z)^2$ in $H^2(z)$.

\begin{figure}[H]
\centering\pipefig{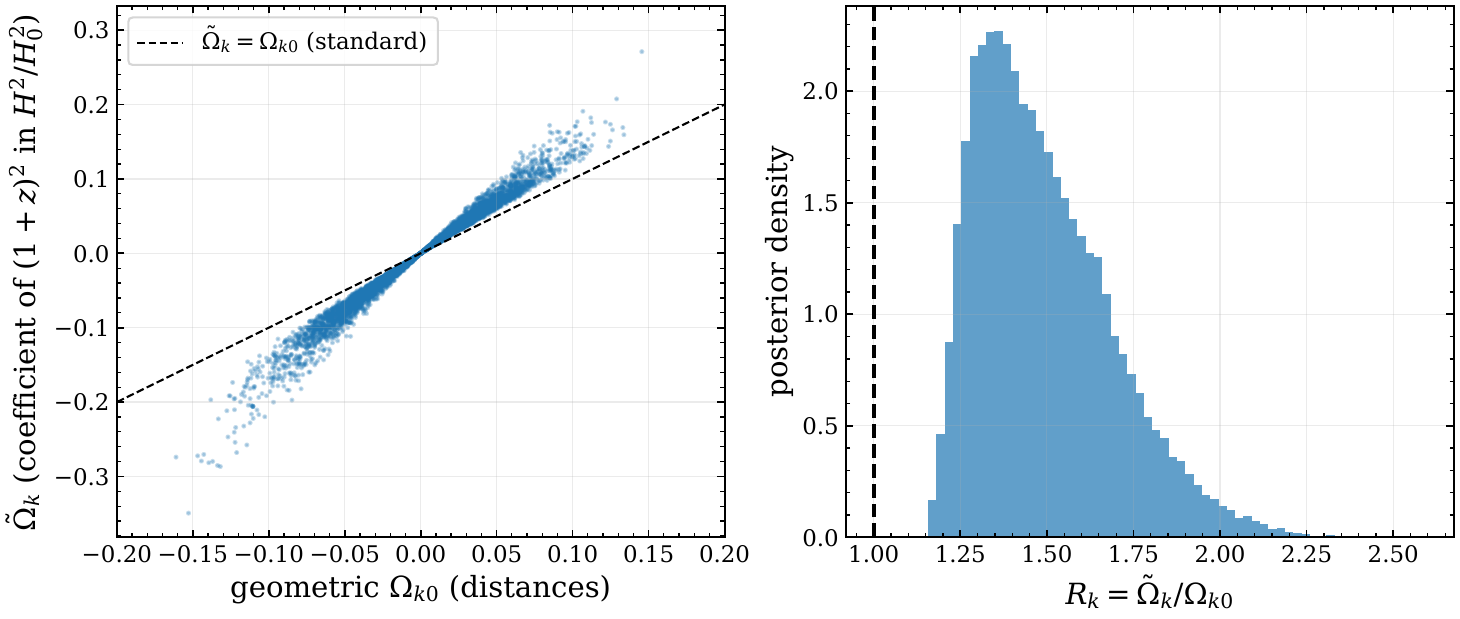}{0.95\textwidth}
\caption{Left: coefficient $\Okt$ of $(1+z)^2$ in $H^2/H_0^2$ against the geometric curvature $\Ok$ for posterior samples of curved GO-NTHDE; the dashed line is the standard relation $\Okt=\Ok$. Right: posterior of $\Rk=\Okt/\Ok$; $\Rk>1$ by Proposition~\ref{prop:ratio}.}
\label{fig:dyn}
\end{figure}

\subsection{Nucleosynthesis with curvature}
\label{sec:bbnres}
Adding the BBN point of Eq.~\eqref{eq:chibbn} gives $\ln B=\BBNLnBGONTHDEK$ for curved GO-NTHDE and $\BBNLnBRICCIK$ for curved Ricci-NTHDE against $\Lambda$CDM+BBN. The distance of the posterior from the Ricci surface is $\Odm-(1+3|\wgo|)^{-1}=\BBNOdmRicci$ (Fig.~\ref{fig:bbn}): with the BBN point the posterior collapses onto the Ricci surface, as in the flat model~\cite{Pandey:2026flat}, which confirms Proposition~\ref{prop:bbn} in a curved universe. On that surface $\Rk$ is a function of $\wgo$ alone, Eq.~\eqref{eq:Rkbbn}, and its posterior, $\Rk=\RobRkBBN$, follows from that of $\wgo$. The lock removes most of the $\Odm$ range and so costs prior volume, which is why curved GO-NTHDE loses evidence when BBN is added, while curved Ricci-NTHDE, which satisfies the bound by construction ($\Odr=0$), does not: with BBN the Ricci completion is preferred over the plain GO cutoff by $\ln B=\BBNLnBRicciMinusGO$.

\begin{figure}[H]
\centering\pipefig{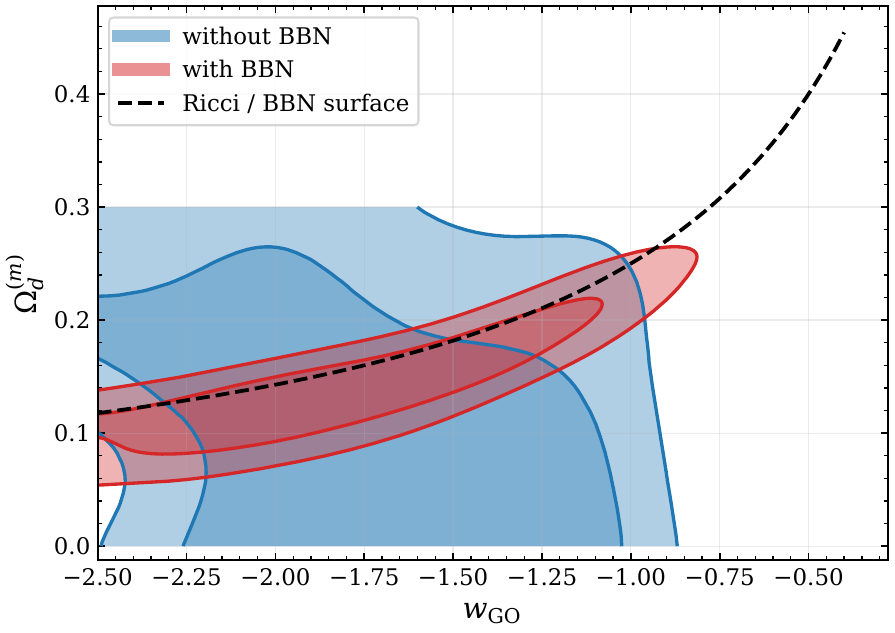}{0.62\textwidth}
\caption{Posterior of curved GO-NTHDE (68\% and 95\% contours from the nested samples) in the $(\wgo,\Odm)$ plane without (blue) and with (red) the BBN data point. Dashed: the Ricci surface $\Odm=1/(1+3|\wgo|)$, Eq.~\eqref{eq:lock}.}
\label{fig:bbn}
\end{figure}

\subsection{Expansion history and the future}
\label{sec:future}
Figure~\ref{fig:bands} shows $w_d(z)$, $q(z)$ and $H/H_{\Lambda{\rm CDM}}$ with 68\% bands. Today $w_0=\DerWzGONTHDEK$, $w_a=\DerWaGONTHDEK$, $q_0=\DerQzGONTHDEK$ and the transition redshift is $z_t=\DerZtGONTHDEK$. A fraction $\PFoldGONTHDEK$ of the posterior reaches the fold of Proposition~\ref{prop:fold}; for those samples the fold is at $a_f=\DerAFoldGONTHDEK$, $t_f-t_0=\DerTFoldGONTHDEK$~Gyr from now (best fit: $a_f=\AFoldBestGONTHDEK$, $\TFoldBestGONTHDEK$~Gyr). The Tsallis parameter is $\delta=\DerDeltaGONTHDEK\times10^{-123}$ (units of Eq.~\eqref{eq:absorb} with $c=1$), and $u_0<\UzeroULGONTHDEK$ at 95\%. The dark energy is of quintessence type today, $P(w_0<-1)=\PPhantomGONTHDEK$, although the GO mode itself is phantom, $\wgo=\PostWgoGONTHDEK$: the Tsallis deformation raises $w_d$ above $\wgo$, and $u_0$ and $\wgo$ are strongly anticorrelated (correlation coefficient $\CorrUWGONTHDEK$). This is how the model follows the mild preference of DESI DR2 with supernovae for $w_0>-1$. The long negative tail of $w_a$ comes from trajectories close to the fold, where $w_d$ changes rapidly. The fold is the price of this behaviour: it is reached by a fraction $\PFoldGONTHDEK$ of the posterior, compared with $\PFoldGONTHDE$ in the flat model, so curvature does not change the future of the model.

\begin{figure}[H]
\centering\pipefig{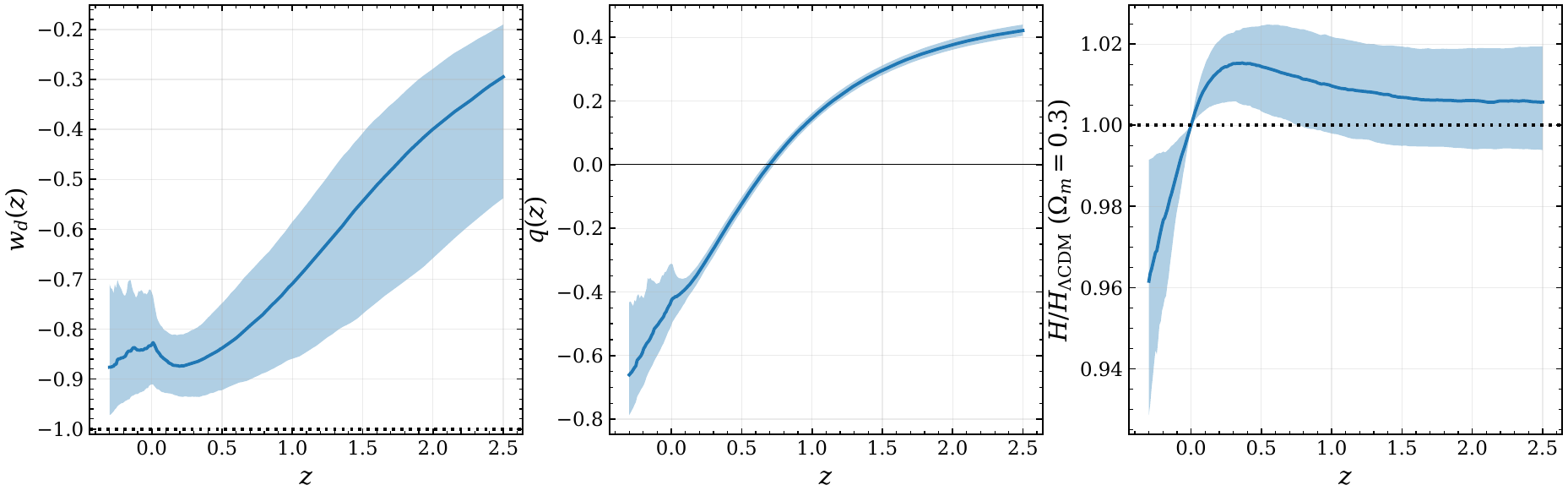}{\textwidth}
\caption{Reconstructed $w_d(z)$ (left), $q(z)$ (middle) and $H/H_{\Lambda{\rm CDM}}$ for $\Omega_m=0.3$ (right) for curved GO-NTHDE; shaded: 68\% from posterior samples; $z<0$ is the future.}
\label{fig:bands}
\end{figure}

\subsection{Consistency of the data sets and mock recovery}
\label{sec:tension}
The tension statistic of Eq.~\eqref{eq:qdmap} between the supernovae and chronometers+BAO is $Q_{\rm DMAP}=\QDMAPOLCDM$ ($\QSigOLCDM\sigma$) for o$\Lambda$CDM and $\QDMAPGONTHDEK$ ($\QSigGONTHDEK\sigma$) for curved GO-NTHDE; $H_0=\HSNGONTHDEK$ (SN) against $\HCBGONTHDEK$ (CC+BAO) (Fig.~\ref{fig:probes}). In both models the two data sets disagree at a similar level, and the disagreement is in $H_0$, between the SH0ES-calibrated supernovae and the chronometers with BAO and a free $r_d$; curvature does not resolve it, because $\Ok$ and $H_0$ are uncorrelated. In the mock test of Sect.~\ref{sec:numerics} (Fig.~\ref{fig:mock}), $\MockNin$ of the $\MockNpar$ input values lie inside the 68\% intervals and all inside the 95\% intervals; the exception is $\Odm$, the parameter the data constrain least (input $\MockOdmTruth$, upper edge of the 68\% interval $\MockOdmHi$). The input curvature $\MockOkTruth$ is recovered as $\Ok=\MockOkPost$, so a curvature of this size would be measured at only about one standard deviation, and $u_0=\MockUzeroTruth$ is recovered with a posterior width $\MockUzeroSd$.

\begin{figure}[H]
\centering\pipefig{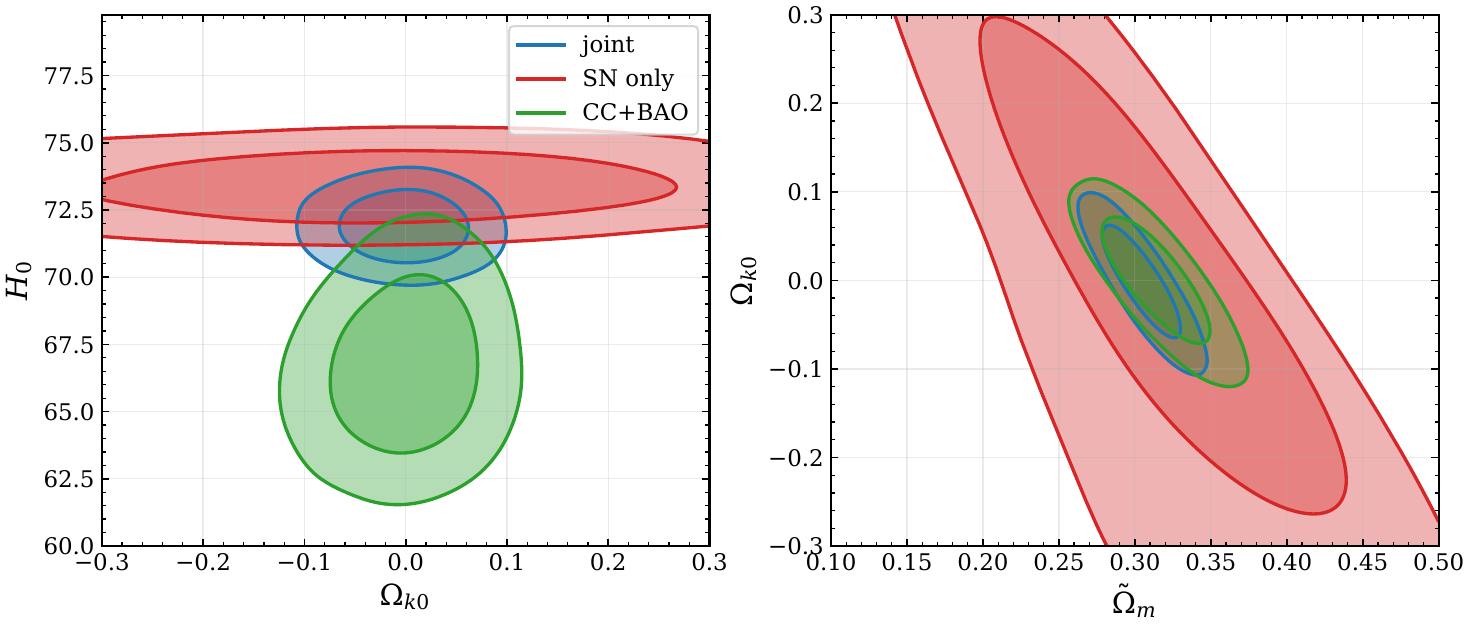}{0.95\textwidth}
\caption{Curved GO-NTHDE: posteriors from supernovae only, chronometers+BAO only and the joint data in the $(\Ok,H_0)$ plane (left) and the $(\Omt,\Ok)$ plane (right).}
\label{fig:probes}
\end{figure}

\begin{figure}[H]
\centering\pipefig{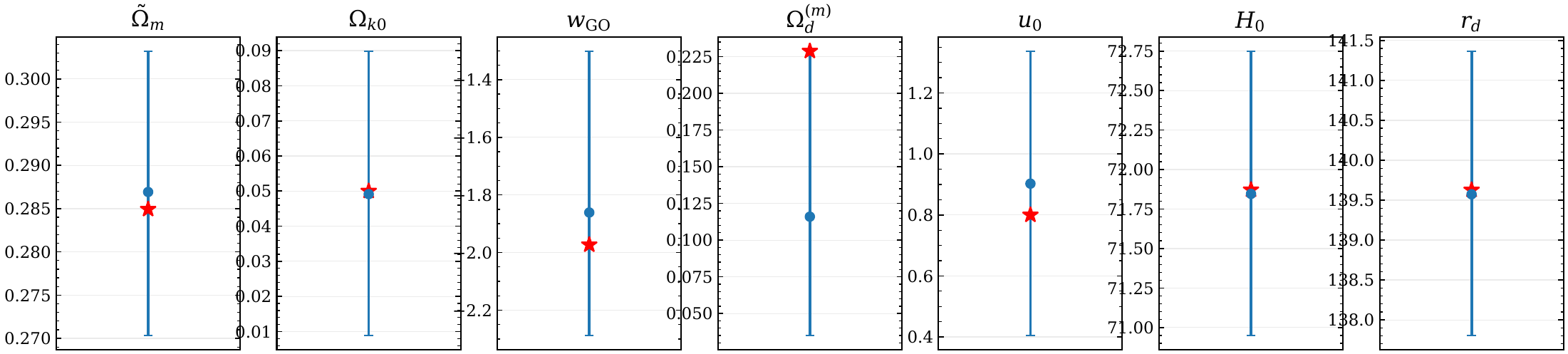}{0.95\textwidth}
\caption{Mock recovery: posterior median and 68\% interval (points) against the input (stars) for each parameter of curved GO-NTHDE.}
\label{fig:mock}
\end{figure}

\subsection{Model comparison across models and robustness}
\label{sec:robust}
Figure~\ref{fig:mcomp} summarizes $\Delta$AIC, $\Delta$BIC and $\ln B$. Table~\ref{tab:robust} and Fig.~\ref{fig:robust} repeat the analysis of curved GO-NTHDE with other supernova compilations, with the curvature prior widened and narrowed, with the prior on $\Odm$ widened to $[0,0.6]$ (because $\Odm$ enters $\Rk$, Eq.~\eqref{eq:Rkphys}, its prior range could shape the posterior of $\Rk$; with the wider box $\Rk=\RobRkODMWIDE$, against $\DerRkGONTHDEK$ in the main run), with the external Planck+BAO curvature term of Eq.~\eqref{eq:chiok}, and with BBN. Three conclusions hold in every variant. First, the curvature is consistent with zero (Fig.~\ref{fig:robust}, right). Second, the evidence depends mainly on the supernova sample: with Union3, curved GO-NTHDE gives $\Delta\chi^2=\RobDChiUNIONTHREE$ and $\ln B=\RobLnBUNIONTHREE$, on a par with $\Lambda$CDM; with DES-Dovekie $\ln B=\RobLnBDOVEKIE$, and with PantheonPlus without SH0ES $\ln B=\RobLnBPPNOSHOES$, in line with the stronger preference of Union3 for evolving dark energy~\cite{DESI:2025zgx}. Third, the curvature prior acts as expected for a parameter measured to about $\pm0.04$: doubling the box lowers $\ln B$ to $\RobLnBOKWIDE$ and narrowing it to $\pm0.1$ raises it to $\RobLnBOKNARROW$, while the Planck+BAO term pins $\Ok=\RobOkOKGAUSS$ and costs the Occam factor estimated in Sect.~\ref{sec:data} ($\ln B=\RobLnBOKGAUSS$). Halving the $u_0$ prior ($\RobLnBUHALF$), widening the $\Odm$ prior ($\RobLnBODMWIDE$) and adding a correlated systematic to the chronometers ($\RobLnBCCSYS$) change $\ln B$ by a few tenths only.

\begin{figure}[H]
\centering\pipefig{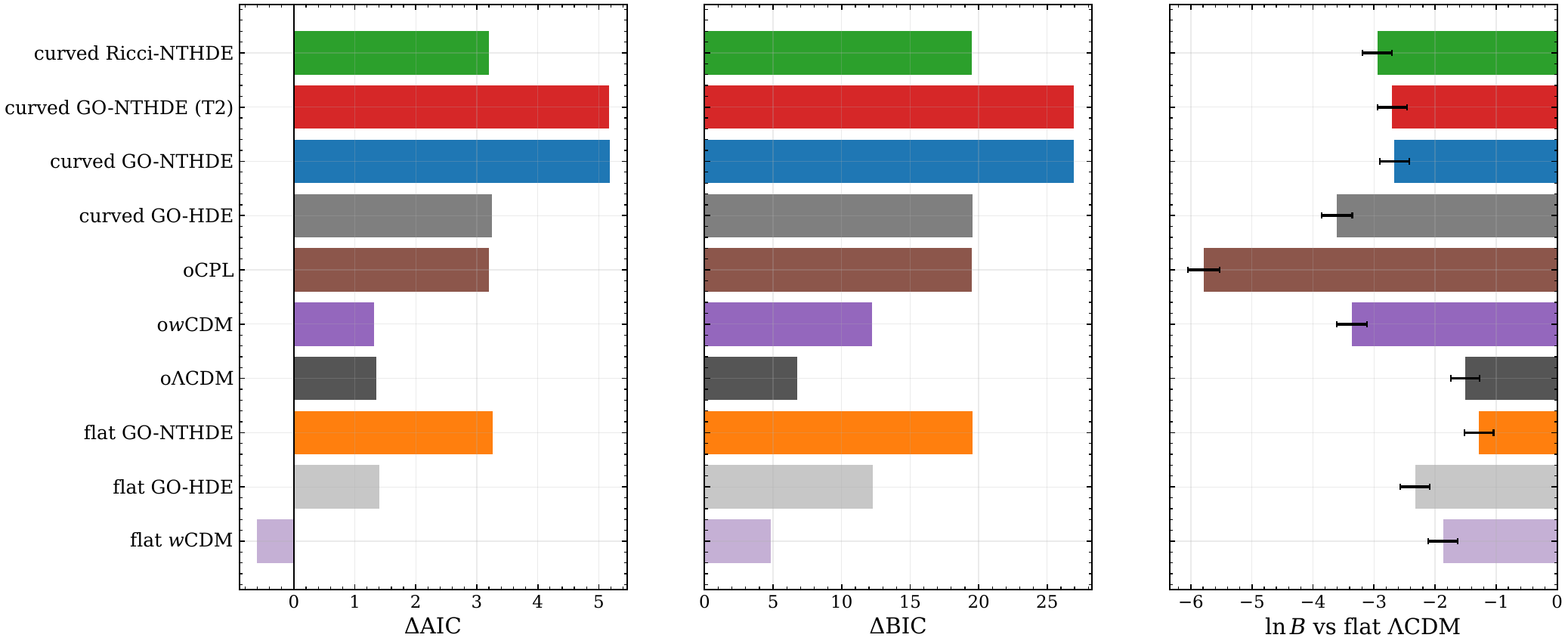}{\textwidth}
\caption{$\Delta$AIC, $\Delta$BIC and $\ln B$ (with error bars) of every model relative to flat $\Lambda$CDM.}
\label{fig:mcomp}
\end{figure}

\begin{table}[H]
\centering\small
\caption{Robustness of curved GO-NTHDE: $\Delta\chi^2$, $\ln B$ against flat $\Lambda$CDM on the same data (both normalizations) and the curvature posterior. Variants add or change one ingredient of the main analysis.}
\label{tab:robust}
\resizebox{\textwidth}{!}{\begin{tabular}{lcccc}
\toprule
Variant & $\Delta\chi^2$ & $\ln B$ (box) & $\ln B$ (allowed) & $\Omega_{k0}$\\
\midrule
Union3 SNe & $\RobDChiUNIONTHREE$ & $\RobLnBUNIONTHREE$ & $\RobLnBAUNIONTHREE$ & $\RobOkUNIONTHREE$ \\
DES-Dovekie SNe & $\RobDChiDOVEKIE$ & $\RobLnBDOVEKIE$ & $\RobLnBADOVEKIE$ & $\RobOkDOVEKIE$ \\
CC with 5\% correlated sys. & $\RobDChiCCSYS$ & $\RobLnBCCSYS$ & $\RobLnBACCSYS$ & $\RobOkCCSYS$ \\
PantheonPlus without SH0ES & $\RobDChiPPNOSHOES$ & $\RobLnBPPNOSHOES$ & $\RobLnBAPPNOSHOES$ & $\RobOkPPNOSHOES$ \\
$\Omega_{k0}$ prior widened to $\pm0.6$ & $\RobDChiOKWIDE$ & $\RobLnBOKWIDE$ & $\RobLnBAOKWIDE$ & $\RobOkOKWIDE$ \\
$\Omega_{k0}$ prior narrowed to $\pm0.1$ & $\RobDChiOKNARROW$ & $\RobLnBOKNARROW$ & $\RobLnBAOKNARROW$ & $\RobOkOKNARROW$ \\
$u_0$ prior halved & $\RobDChiUHALF$ & $\RobLnBUHALF$ & $\RobLnBAUHALF$ & $\RobOkUHALF$ \\
$\Omega_d^{(m)}$ prior widened to $[0,0.6]$ & $\RobDChiODMWIDE$ & $\RobLnBODMWIDE$ & $\RobLnBAODMWIDE$ & $\RobOkODMWIDE$ \\
+ Planck+BAO Gaussian prior on $\Omega_{k0}$ & $\RobDChiOKGAUSS$ & $\RobLnBOKGAUSS$ & $\RobLnBAOKGAUSS$ & $\RobOkOKGAUSS$ \\
+ BBN ($N_\nu$) & $\RobDChiBBN$ & $\RobLnBBBN$ & $\RobLnBABBN$ & $\RobOkBBN$ \\
\bottomrule
\end{tabular}
}
\end{table}

\begin{figure}[H]
\centering\pipefig{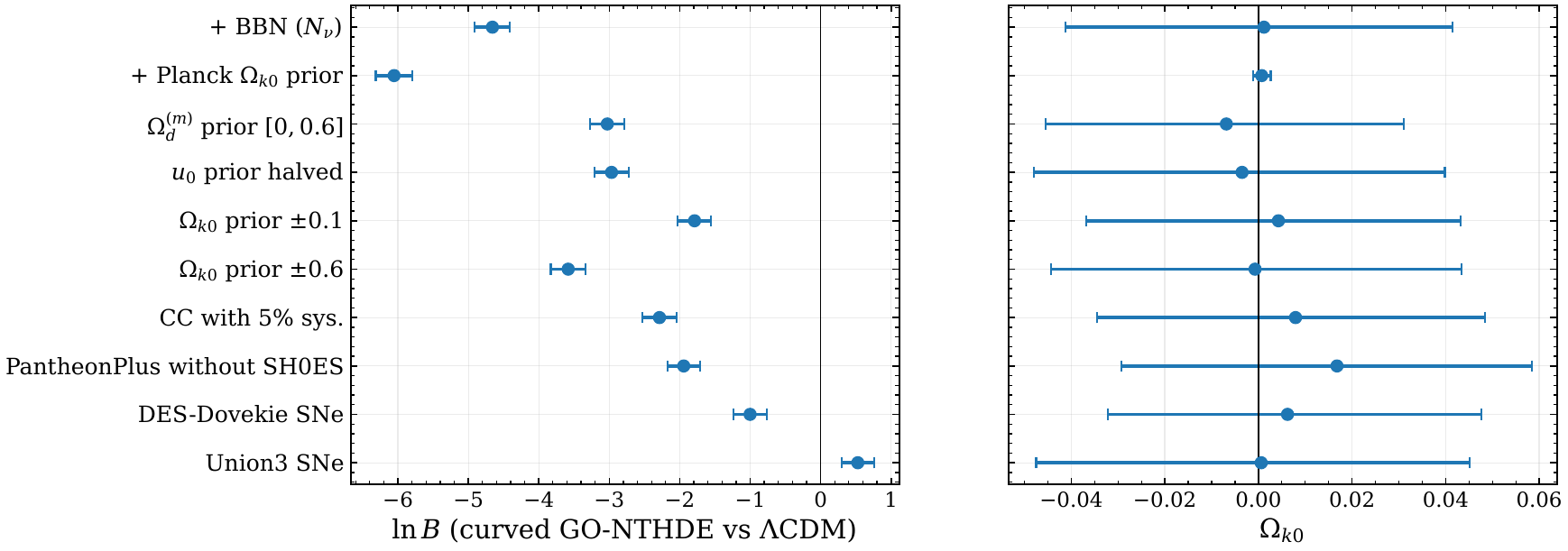}{0.95\textwidth}
\caption{Left: $\ln B$ of curved GO-NTHDE against flat $\Lambda$CDM for each robustness variant. Right: its curvature posterior.}
\label{fig:robust}
\end{figure}

\subsection{Comparison with other holographic models confronted with DESI DR2}
\label{sec:literature}
Table~\ref{tab:literature} collects recent analyses of holographic and generalized-entropy dark energy with DESI DR2 BAO. The supernova samples differ (Pantheon+ with or without the SH0ES calibration, Union3), so the numbers are indicative; for a like-for-like comparison the table also lists our robustness runs with PantheonPlus without SH0ES and with Union3. Two patterns are common to all entries. First, the improvement in $\chi^2$ over $\Lambda$CDM is a few units with PantheonPlus and larger with Union3, as is known for evolving dark energy with DESI DR2~\cite{DESI:2025zgx}; for flat Barrow HDE with the GO cutoff the fitted Barrow exponents are negative, $\Delta=\LitLucPPDelta$ (PantheonPlus) and $\LitLucUDelta$ (Union3)~\cite{Luciano:2026hints}. Second, once the extra parameters are paid for, AIC and the evidence are inconclusive or mildly favour $\Lambda$CDM~\cite{Luciano:2026hints,Luciano:2025jheap,Halder:2026plb}; the one clear preference for a holographic model, $\Delta{\rm AIC}=\LitMahDAIC$ for Barrow HDE with the ``relevant'' GO cutoff of modified Barrow cosmology~\cite{Mahmoudifard:2025pdu}, uses CMB and pre-DESI BAO data. For the curvature, the closest analogue of our analysis is $w$CDM with curvature fitted to DESI DR2, chronometers, BBN and Pantheon+, which gives $\Ok=\LitYadOk$ and $w=\LitYadW$~\cite{Yadav:2026jheap}. Curved GO-NTHDE follows the same pattern. Its $\Delta\chi^2$ values, $\RobDChiPPNOSHOES$ with PantheonPlus without SH0ES and $\RobDChiUNIONTHREE$ with Union3, are close to those of flat Barrow HDE with the GO cutoff on the same supernova samples, $\LitLucPPDChi$ and $\LitLucUDChi$~\cite{Luciano:2026hints}; as there, only with Union3 does the evidence stop disfavouring the model. Its curvature, $\Ok=\PostOkGONTHDEK$, agrees with the $w$CDM value of Ref.~\cite{Yadav:2026jheap}.

\begin{table}[H]
\centering\scriptsize
\caption{Recent confrontations of holographic and generalized-entropy dark energy with DESI DR2 BAO, and this work. $\Delta\chi^2$ and $\Delta$AIC are relative to flat $\Lambda$CDM on the same data; $\ln B$ is the logarithm of the Bayes factor against $\Lambda$CDM where it was computed. Negative $\Delta\chi^2$ and $\Delta$AIC, and positive $\ln B$, favour the model. PP: PantheonPlus without SH0ES; PP+SH0ES: with the Cepheid calibration; U3: Union3; OHD/CC: cosmic chronometers. The entry of Ref.~\cite{Mahmoudifard:2025pdu} uses Planck CMB and lensing with pre-DESI BAO and is listed for completeness.}
\label{tab:literature}
\begin{tabular}{p{0.12\textwidth}p{0.25\textwidth}p{0.20\textwidth}ccc}
\toprule
Reference & Model & Data & $\Delta\chi^2$ & $\Delta$AIC & $\ln B$\\
\midrule
\cite{Luciano:2026hints} & Barrow HDE, GO cutoff, flat & PP + OHD + DESI DR2 & $\LitLucPPDChi$ & $\LitLucPPDAIC$ & --\\
 & & U3 + OHD + DESI DR2 & $\LitLucUDChi$ & $\LitLucUDAIC$ & --\\
\cite{Luciano:2025jheap} & Barrow/Tsallis HDE, event horizon, flat & SN + OHD + DESI DR2 & $\LitLPSDChi$ & $\LitLPSDAIC$ & --\\
\cite{Halder:2026plb} & Viaggiu HDE, event horizon, flat & PP + CC + DESI DR2 & $\LitHalPPDChi$ & $\LitHalPPDAIC$ & $\LitHalPPLnZ$\\
 & & U3 + CC + DESI DR2 & $\LitHalUDChi$ & $\LitHalUDAIC$ & $\LitHalULnZ$\\
\cite{Mahmoudifard:2025pdu} & Barrow HDE, ``relevant'' GO cutoff, flat & CMB + lensing + BAO + CC + SN & $\LitMahDChi$ & $\LitMahDAIC$ & --\\
\midrule
This work & curved GO-NTHDE & PP + SH0ES + CC + DESI DR2 & $\DChiGONTHDEK$ & $\DAICGONTHDEK$ & $\LnBGONTHDEK$\\
 & & PP + CC + DESI DR2 & $\RobDChiPPNOSHOES$ & $\RobDAICPPNOSHOES$ & $\RobLnBPPNOSHOES$\\
 & & U3 + CC + DESI DR2 & $\RobDChiUNIONTHREE$ & $\RobDAICUNIONTHREE$ & $\RobLnBUNIONTHREE$\\
\bottomrule
\end{tabular}
\end{table}

\section{Conclusions}
\label{sec:conclusions}
We have extended new Tsallis holographic dark energy with the Granda--Oliveros cutoff to a curved universe. The background is an autonomous system for $(\Omega_d,u,\Omega_k)$ whose curved GO-HDE limit has the closed-form solution of Karami and Fehri~\cite{Karami:2009gof} (Proposition~\ref{prop:curved}); we have shown that this solution splits the dark energy into a matter-like and a curvature-like part. Four results do not depend on data. First, the dark energy tracks curvature with fraction $\alpha-\gamma$ during curvature domination, in addition to $\alpha-3\gamma/2$ during matter domination. Second, because the GO cutoff lacks the curvature term of the Ricci scalar, $H^2(z)/H_0^2$ contains the coefficient $\Rk\Ok$ of $(1+z)^2$ with $\Rk>1$, while distances and the Clarkson--Bassett--Lu test contain $\Ok$; the Ricci completion $\kappa=\gamma$, $\alpha=2\gamma$ gives $\Rk=1$. Third, the nucleosynthesis lock $\Odm=1/(1+3|\wgo|)$ and the finite-time fold singularity of the flat model survive curvature. Fourth, at first order in the Tsallis deformation the model is GO-HDE plus a constant independent of curvature.

Confronted with Pantheon+SH0ES supernovae, cosmic chronometers and DESI DR2 BAO, the model gives $\Ok=\PostOkGONTHDEK$: late-time data are consistent with a flat universe also when the dark energy is holographic and responds to curvature, and the GO models agree with o$w$CDM and oCPL. Curvature is not rewarded: allowing it lowers the evidence of GO-NTHDE by $\ln B=\LnBCurvVsFlat$, and flat $\Lambda$CDM is preferred over curved GO-NTHDE with weak-to-moderate evidence ($\ln B=\LnBGONTHDEK$ for the model against $\Lambda$CDM), although the model fits slightly better ($\Delta\chi^2=\DChiGONTHDEK$). The verdict depends on the supernova sample: with Union3 the model is on a par with $\Lambda$CDM ($\ln B=\RobLnBUNIONTHREE$). The curvature consistency ratio cannot yet be measured: the GO and Ricci cutoffs fit equally well ($\ln B=\LnBGOvsRicci$), and the posterior of $\Rk$ follows the prior on $\Odm$. Nucleosynthesis locks $\Odm$ onto the Ricci surface with curvature as without it, and with BBN the Ricci completion is preferred ($\ln B=\BBNLnBRicciMinusGO$ for curved Ricci-NTHDE against curved GO-NTHDE). A fraction $\PFoldGONTHDEK$ of the posterior reaches the fold, so the model predicts a generalized sudden singularity in the future unless $u_0$ is small. A measurement of $\Rk$ requires the geometric curvature, from distances, and the coefficient of $(1+z)^2$ in $H^2(z)$ to be determined separately at the $10^{-2}$ level. Limitations: the analysis uses background data only, so curvature is constrained only to a few times $10^{-2}$ and $u_0$ is not identified; CMB distance priors would sharpen $\Ok$ but need the early-time expansion with radiation and a computed $r_d$. A treatment of perturbations, and of other curvature completions of the cutoff beyond the two cases studied here, are natural extensions.

\section*{Data and code availability}
Data are public: Pantheon+SH0ES~\cite{Scolnic:2021amr,Brout:2022vxf}, Union3~\cite{Rubin:2023ovl}, DES-Dovekie~\cite{Popovic:2025dovekie}, cosmic chronometers~\cite{Moresco:2022phi}, DESI DR2~\cite{DESI:2025zgx}. 

\section*{Declaration of competing interest}
The authors declare that they have no known competing financial interests or personal relationships that could have appeared to influence the work reported in this paper.

\bibliographystyle{elsarticle-num}
\bibliography{refs}

\end{document}